\newcommand{\ket}[1]{\vert #1 \rangle}
\newcommand{\ceil}[1]{\left\lceil #1 \right\rceil}
\newcommand{\mybar}[1]{\lambda}
\newcommand{\mydelta}[1]{\Delta_G(X,#1)}
\newcommand{\E}{\mathcal{E}}
\newcommand{\poly}{\mathrm{poly}}
\newcommand{\Ss}{\mathcal{S}}
\newtheorem{theorem}{Theorem}[section]
\newtheorem{proposition}{Proposition}[section]
\newtheorem{definition}{Definition}[section]
\newtheorem{lemma}{Lemma}[section]
\newenvironment{proof-sketch}{\trivlist\item[]\emph{Brief proof sketch}:}%
{\unskip\nobreak\hskip 1em plus 1fil\nobreak$\Box$
\parfillskip=0pt%
\endtrivlist}
\begin{document}
\author{%
Fran{\c c}ois Le Gall \\ 
     Department of Computer Science\\
     Graduate School of Information Science and Technology\\
    The University of Tokyo\\
    \texttt{legall@is.s.u-tokyo.ac.jp} }
\title{Improved Quantum Algorithm for Triangle Finding via Combinatorial Arguments}
\date{}
\maketitle\vspace{20mm}
\begin{abstract} 
In this paper we present a quantum algorithm solving the triangle finding problem in unweighted graphs with query complexity $\tilde O(n^{5/4})$, where $n$ denotes the number of vertices in the graph. This improves the previous upper bound $O(n^{9/7})=O(n^{1.285...})$ recently obtained by Lee, Magniez and Santha. Our result shows, for the first time, that in the quantum query complexity setting unweighted triangle finding is easier than its edge-weighted version, since for finding an edge-weighted triangle Belovs and Rosmanis proved that any quantum algorithm requires $\Omega(n^{9/7}/\sqrt{\log n})$ queries. Our result also illustrates some limitations of the non-adaptive learning graph approach used to obtain the previous $O(n^{9/7})$ upper bound since, even over unweighted graphs, any quantum algorithm for triangle finding obtained using this approach requires $\Omega(n^{9/7}/\sqrt{\log n})$ queries as well. To bypass the obstacles characterized by these lower bounds, our quantum algorithm uses combinatorial ideas exploiting the graph-theoretic properties of triangle finding, which cannot be used when considering edge-weighted graphs or the non-adaptive learning graph approach.
\end{abstract}
\newpage
\section{Introduction}
\paragraph{The triangle finding problem and its classical complexity.}
Triangle finding is a graph-theoretic problem 
whose complexity is deeply connected to the complexity of 
several other computational tasks in theoretical computer science, 
such as solving path or matrix problems \cite{Bansal+ToC12,Czumaj+SICOMP09,Itai+SICOMP78,PatrascuSTOC10,Vassilevska+SICOMP13,Williams+FOCS10,WilliamsSTOC14}. 
In its standard version 
it asks to find, given an undirected and unweighted graph $G=(V,E)$, 
three vertices $v_1,v_2,v_3\in V$ such that $\{v_1,v_2\}$, $\{v_1,v_3\}$
and $\{v_2,v_3\}$ are edges of the graph. It has been known for a long time that
this problem is not harder than Boolean matrix multiplication \cite{Itai+SICOMP78},
which implies that triangle finding in a graph of $n$ vertices can be solved in
$O(n^{\omega+\varepsilon})$ time for any constant $\varepsilon>0$, where~$\omega$ represents the 
exponent of square matrix multiplication (currently, the best known upper bound on $\omega$ is $\omega<2.3729$, see \cite{LeGall14,WilliamsSTOC12}). This is still the best known upper bound on the classical time complexity of triangle finding.
Recently, Vassilevska Williams and Williams showed a converse reduction~\cite{Williams+FOCS10}: they proved that a subcubic-time algorithm for triangle finding can be used, in a combinatorial way, to construct a subcubic-time algorithm for Boolean matrix multiplication. 

Much progress has furthermore been achieved recently on understanding the classical complexity of weighted versions of 
the triangle finding problem \cite{Czumaj+SICOMP09,PatrascuSTOC10,Vassilevska+SICOMP13,Williams+FOCS10,WilliamsSTOC14}.
In particular, it has been shown that the exact 
node-weighted triangle finding problem, where the goal is to find three vertices 
in a node-weighted graph 
such that the sum of the weights of these three vertices is equal to a given value,
is not harder than matrix multiplication over a field \cite{Czumaj+SICOMP09,Vassilevska+SICOMP13}. 
For the exact edge-weighted triangle finding problem, where the goal is to find three vertices 
$v_1,v_2,v_3$ in a edge-weighted graph 
such that the sum of the weights of $\{v_1,v_2\}$, $\{v_1,v_3\}$ and $\{v_2,v_3\}$ 
is equal to a given value, it has been shown recently that the situation is completely different:
it requires $\Omega(n^{3-\delta})$ time for all $\delta>0$ 
unless the 3SUM problem on~$N$ integers can be solved in $O(N^{2-\delta/6})$ time 
\cite{PatrascuSTOC10,Vassilevska+SICOMP13}, which strongly suggests that the edge-weighted version
of triangle finding is harder than its node-weighted and unweighted versions.

In this paper \emph{triangle finding problem} will always refer to the unweighted version -- this is the version studied in this paper, as in most previous works on quantum algorithms. The words \emph{node-weighted} or \emph{edge-weighted} will be explicitly added when referring to the weighted versions.

\paragraph{Quantum query algorithms for triangle finding.}
Besides the time complexity setting discussed above, problems like triangle finding can also be studied in the query complexity setting.
In the usual model used to describe the query complexity of such problems,
the set of edges $E$ of the graph is unknown but can be accessed through an oracle: given two vertices $u$ and $v$ in $V$, 
one query to the oracle outputs one if $\{u,v\}\in E$ and zero if $\{u,v\}\notin E$. In the quantum query complexity setting, 
one further assume that the oracle can be queried in superposition. One of the main interests of query
 complexity is that, being a restricted model of computation, in many cases one can show lower bounds on the complexity of problems (in both the classical and quantum settings). For instance, it is straightforward to show that the randomized classical query complexity of triangle finding is $\Omega(n^2)$, where $n$ denotes the number of vertices in the graph, 
 which matches the trivial upper bound. 
 In comparison, several better quantum query algorithms have been developed. 
 Indeed, besides its theoretical interest, the triangle finding problem has been one of the main problems 
 that stimulated the development of new techniques in quantum query complexity, and the history of improvement
 of upper bounds on the query complexity of triangle finding parallels the development of general techniques in the 
 quantum complexity setting, as we explain below.

Among the first techniques for constructing quantum algorithms were 
Grover search \cite{GroverSTOC96},
which can be used to obtain a quadratic speedup over classical exhaustive search for any 
unstructured search problem,
 and its variant known as amplitude amplification \cite{Brassard+02}.
Grover search immediately gives, 
when applied to triangle finding as a search over
the space of triples of vertices of the graph, a quantum algorithm 
with query complexity $O(n^{3/2})$.
Using amplitude amplification,
Buhrman et al.~\cite{Buhrman+SICOMP05} showed how to construct a quantum algorithm for triangle finding with query  
complexity $O(n+\sqrt{nm})$ for a graph with~$m$ edges, giving an improvement  for sparse 
graphs. Combining amplitude amplification with clever combinatorial arguments, 
Szegedy \cite{Szegedy03} (see also \cite{Magniez+SICOMP07}) 
constructed a quantum algorithm for triangle finding with query complexity 
$\tilde O(n^{10/7})=\tilde O(n^{1.428...})$.\footnote{In this paper the $\tilde O(\cdot)$
notation removes $\poly(\log n)$ factors.}

The quantum technique that led to the next improvement was the 
concept of quantum walk search developed by Ambainis~\cite{AmbainisSICOMP07}, 
and used originally to construct an optimal quantum algorithm for the element distinctness problem~\cite{AmbainisSICOMP07}.
This new approach, which combines amplitude amplification with a quantum version of random 
walks over Johnson graphs and was later generalized to quantum walks over 
more general graphs \cite{Magniez+SICOMP11,SzegedyFOCS04},
has turned out to be one of the most useful tools for the design of quantum algorithms 
for search problems.
Magniez, Santha and Szegedy \cite{Magniez+SICOMP07}, using  
quantum walk search,  
constructed a quantum algorithm for triangle finding with improved query complexity 
$\tilde O(n^{13/10})$.

Besides Grover search and quantum walks, 
a third technique to design quantum query algorithms appeared recently when 
Reichardt \cite{ReichardtFOCS09} proved that the general adversary bound, initially 
shown to be only a lower bound on the quantum query complexity \cite{Hoyer+STOC07},
is actually an upper bound, which implies that the quantum query complexity of a problem
can be found by solving a semi-definite positive program. While this optimization problem 
in general exponentially many constraints, Belovs \cite{BelovsSTOC12}
then developed a technique known as the learning graph approach to restrict the search 
space to candidates that automatically satisfy the constraints, thus giving an intuitive and 
efficient way to obtain a (not necessarily optimal) solution of the original optimization problem.
Belovs \cite{BelovsSTOC12} illustrated the power of this new technique by using it to improve 
the quantum query complexity
of triangle finding to $O(n^{35/27})=O(n^{1.296...})$.
Lee, Magniez and Santha \cite{Lee+SODA13} then 
showed, again using learning graphs, how to further improve this query complexity to 
 $O(n^{9/7})=\tilde O(n^{1.285...})$, which was the best upper bound on the 
 quantum complexity of triangle finding known before the present paper.
These two results based on learning graphs actually used a
simple notion of learning graphs (referred to as ``non-adaptive'' learning graphs in \cite{Belovs+CCC13})
where the queries done by the algorithm do not depend on the values of prior queries, which implies that the 
same upper bound $O(n^{9/7})$ holds for weighted versions of the triangle finding problem as well.
Jeffery, Kothari and Magniez~\cite{Jeffery+SODA13} showed how this complexity can also be 
achieved, up to polylogarithmic factors, using quantum walks by introducing the concept 
of nested quantum walks.

The best known lower bound on the quantum query complexity of triangle finding is the trivial $\Omega(n)$.
Belovs and Rosmanis \cite{Belovs+CCC13} recently showed that 
any quantum algorithm (i.e., not necessarily based on learning graphs) solving the
edge-weighted triangle finding problem requires
$\Omega (n^{9/7}/\sqrt{\log n})$ queries. Since a non-adaptive learning graph does not treat differently
the unweighted triangle finding problem and its weighted versions, as mentioned above, 
this lower bound for the weighted case implies that 
any quantum algorithm for unweighted triangle finding constructed using a non-adaptive learning graph
requires $\Omega (n^{9/7}/\sqrt{\log n})$ queries as well, which matches, up to logarithmic factors, the best known upper bound described in the previous paragraph.
Practically, this means that, in order 
to improve by more than a $1/\sqrt{\log n}$ factor 
the $O(n^{9/7})$-query upper bound on the quantum query complexity  
of triangle finding, one need to take in consideration the difference between the unweighted triangle finding problem 
and its edge-weighted version. Moreover, if the learning graph approach is used, then the learning graph
constructed must be adaptive.
While a concept of adaptive learning graph has been developed by Belovs  
and used to design a new quantum algorithm for the $k$-distinctness problem \cite{BelovsFOCS12}, 
so far no application of this approach to the triangle finding problem has been discovered.

\paragraph{Statement of our result.}
In this paper we show that it is possible to overcome the $\Omega (n^{9/7}/\sqrt{\log n})$ barrier, 
and obtain the following result.
\begin{theorem}\label{theorem:main}
There exists a quantum algorithm that, 
given as input the oracle of an unweighted graph~$G$ on $n$ vertices,
outputs a triangle of $G$ with probability at least $2/3$ if a triangle exists,
and uses $\tilde O(n^{5/4})$ queries to the oracle.
\end{theorem}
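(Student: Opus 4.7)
The plan is to get past the $\Omega(n^{9/7}/\sqrt{\log n})$ barrier by combining a quantum walk search in the framework of Magniez et al.\ with a combinatorial reduction that is valid only in the unweighted setting. The starting point is the observation that every approach yielding the previous $\tilde O(n^{9/7})$ upper bound treats each edge of $G$ essentially independently, either via a non-adaptive learning graph or via a nested walk that does not use graph structure, and is therefore ruled out by the Belovs--Rosmanis lower bound; the new ingredient must be a combinatorial argument about \emph{actual} edges, such as degree thresholds or common-neighborhood bounds, which has no analogue for edge weights and whose soundness uses only the $0/1$ nature of the oracle.

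The structural part of the proof would proceed by a case analysis on where a triangle of $G$ can sit. First, a cheap preprocessing samples a random set $A\subseteq V$ of size $n^\alpha$, for a parameter $\alpha$ to be optimized, and queries the bipartite neighborhood of $A$ exhaustively (using Grover-style search to find the hidden edges cheaply when few are present), so that any triangle having a vertex in $A$ is discovered directly. Triangles avoiding $A$ are then classified further according to the degrees of their vertices and the sizes of the common neighborhoods of their edges, splitting them into a ``locally sparse'' and a ``locally dense'' class. By tuning the degree thresholds and~$\alpha$, each class should become amenable to a Johnson-graph quantum walk, possibly nested as in the Jeffery--Kothari--Magniez construction, whose setup, update and checking costs can be bounded using the structural promise carried by that class. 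Balancing the three costs of the outer walk against the cost of the inner walk and the preprocessing should collapse the total query count to $\tilde O(n^{5/4})$ for the right choice of exponents.

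The main difficulty will be the combinatorial one: partitioning the triangles of $G$ into a constant number of classes such that (i) every triangle falls into at least one class, (ii) each class carries a structural promise strong enough to make the corresponding Johnson-graph walk cheap, and (iii) that promise can be cheaply verified on the fly, so that candidate pairs or vertices that fail the structural test can be discarded inside the walk without inflating the checking cost. A secondary but nontrivial nuisance is the propagation of $\polylog(n)$ factors through the (possibly two) nested levels of amplitude amplification with a priori unknown success probability. Assuming the combinatorial classification goes through and that the structural promises are efficiently certifiable, the $\tilde O(n^{5/4})$ bound should follow from a direct optimization of parameters inside the quantum walk framework, with Grover search used at the innermost level.
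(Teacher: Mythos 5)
Your proposal correctly identifies several high-level ingredients that the paper also uses --- that the speedup must come from a combinatorial argument genuinely specific to the $0/1$ oracle, that Johnson-graph quantum walks and Grover search at the innermost level will be involved, that a random preprocessing set is used, and that parameters must be balanced --- but it misses the two ideas that actually make the paper's algorithm work, and the concrete plan you outline (partition the surviving triangles into ``locally sparse'' and ``locally dense'' classes and run a separate nested walk on each) is not what the paper does and, as sketched, would not go through.

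The first missing idea is the \emph{specific} sparsification: after taking a random sample $X\subseteq V$ of size $\tilde\Theta(\sqrt n)$ and ruling out (via Grover) any triangle containing a vertex of $X$, the paper discards every pair of vertices that lies inside the common neighborhood $\E(N_G(u))$ of some $u\in X$. The point of this is not merely to dispose of some triangles but to guarantee, with high probability, that \emph{every} remaining candidate edge $\{v,v'\}$ has at most $\sqrt n$ common neighbors (the ``$k$-good'' property). This yields a single uniform sparsity promise on the whole residual instance; there is no dichotomy into sparse and dense cases, and the promise is exactly what makes the inner Grover search over the surviving pairs around a fixed vertex $w$ cost only $n^{1/4}$ instead of $n^{1/2}$. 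Your degree-threshold/case-analysis plan does not produce this uniform guarantee, and you do not say how a ``locally dense'' class would be handled or why it can even be certified cheaply inside a walk.

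The second missing idea is the hierarchical data-structure design that lets the checking cost be paid. Knowing that the sparsity promise holds is useless unless one can actually \emph{know} the sparse set $\mydelta{B,w}$ before Grover-searching it, and building it from scratch for each candidate $(B,w)$ is far too expensive. The paper solves this with a four-level recursion in which the outer walk over $\Ss(V,\ceil{n^{3/4}})$ stores $\mydelta{A}$ (i.e., $N_G(v)\cap X$ for all $v\in A$, costing only $|A|\cdot|X|=\tilde O(n^{5/4})$ to set up), the inner walk over $\Ss(A,\ceil{n^{1/2}})$ then reconstructs $\mydelta{B,w}$ essentially for free from the stored outer data, and Ambainis' variable-cost search over $w\in V$ is used to average the inner cost $\sqrt{|\mydelta{A,w}|}$ rather than pay the worst case. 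None of these three pieces --- the outer walk storing $\mydelta{A}$ so the inner walk is cheap, the variable-cost search over $w$, or the size estimation needed to control the inner walk's parameters --- appears in your proposal, and without them the ``direct optimization of parameters'' you invoke in the last paragraph has nothing concrete to optimize. So while your intuition about which toolbox to reach into is on target, the proposal as written has a genuine gap: the actual combinatorial lemma and the data-flow between the nested walks are precisely the nontrivial content of the proof, and both are absent.
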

This result shows, for the first time, that in the quantum setting unweighted triangle finding is easier than its edge-weighted version, and thus sheds light on the fundamental difference between these two problems.
Indeed, while in the classical time complexity setting strong evidences exist suggesting that the unweighted version is easier (as already mentioned, the unweighted version is not harder than Boolean matrix multiplication while the exact edge-weighted version is 3SUM-hard \cite{PatrascuSTOC10,Vassilevska+SICOMP13}), Theorem~\ref{theorem:main}, combined with the lower bound by Belovs and Rosmanis \cite{Belovs+CCC13}, enables us to give a separation between the quantum query complexities of these two problems.

Naturally, our result exploits the difference between the triangle finding problem 
and its weighted versions. Our approach does not rely on learning graphs or 
nested quantum walks, the techniques that were used to obtain the previous
best known upper bound. Instead, it 
relies on combinatorial ideas that exploit the fact that the graph is unweighted, 
as needed in any attempt to break the $\Omega (n^{9/7}/\sqrt{\log n})$ barrier,
combined with Grover search, quantum search with variable costs \cite{Ambainis10},
and usual quantum walks over Johnson graphs. 
Our quantum algorithm is highly adaptive, in that all later queries depend 
on the results of the queries done in at a preliminary stage by the algorithm. 
This gives another example of separation between the query complexity obtained 
by adaptive quantum query algorithms and the best query complexity that can be 
achieved 
using non-adaptive learning graphs (which is $\Omega (n^{9/7}/\sqrt{\log n})$ for triangle finding,
as mentioned above), and thus
sheds light on
limitations of the non-adaptive learning graph approach for graph-theoretical problems
such as triangle finding. 

\section{Preliminaries}
In this section we introduce some of our notations, briefly describe
the notion of quantum query algorithms for problems over graphs and 
present standard algorithmic techniques for solving search 
problems in the quantum query complexity setting.
We assume that the reader is familiar with the basics of
quantum computation and refer to, e.g., 
\cite{Buhrman+TCS02} for a more complete treatment
of quantum query complexity. 

For any finite set $X$ and any $r\in\{1,\ldots,|X|\}$
we denote $\Ss(X,r)$ the set of all subsets of~$r$ elements 
of $X$.
Note that 
$|\Ss(X,r)|={|X| \choose r}$.
We will use the notation $\E(X)$ to represent $\Ss(X,2)$,
i.e., the set of unordered pairs of elements in $X$.

Let $G=(V,E)$ be an undirected and unweighted graph, where
$V$ represents the set of vertices and $E\subseteq\E(V)$ 
represents the set of edges. In the query complexity setting, 
we assume that $V$ is known, and that~$E$ can be accessed through 
a quantum unitary operation $\mathcal{O}_G$ defined as follows.
For any pair $\{u,v\}\in \E(V)$, any bit $b\in\{0,1\}$, and any 
binary string $z\in\{0,1\}^\ast$, the operation $\mathcal{O}_G$ 
maps the basis state 
$\ket{\{u,v\}}\ket{b}\ket{z}$ to the state
\[
\mathcal{O}_G\ket{\{u,v\}}\ket{b}\ket{z}=\left\{
\begin{array}{ll}
\ket{\{u,v\}}\ket{b\oplus 1}\ket{z}&\textrm{ if } \{u,v\}\in E,\\
\ket{\{u,v\}}\ket{b}\ket{z}&\textrm{ if } \{u,v\}\notin E,\\
\end{array}
\right.
\]
where $\oplus$ denotes the bit parity (i.e., the logical XOR).
We say that a quantum algorithm computing some property of $G$
uses $k$ queries if the operation $\mathcal{O}_G$, given as an oracle, is
called $k$ times by the algorithm.

We describe below three algorithmic techniques, which we will use in this paper, 
to solve search problems 
over graphs in the quantum query complexity setting: Grover search, 
Ambainis' generalization of Grover search for variable costs, and 
quantum search algorithms based on quantum walks.

\paragraph{Grover search.}
Let $\Sigma$ be a finite set of size $m$.
Consider a Boolean function $f_G\colon\Sigma\to \{0,1\}$ depending on $G$ and assume
that, for any $s\in \Sigma$, the value $f_G(s)$ can be computed using 
$t$ queries to $\mathcal{O}_G$.
The goal is to find some element $s\in \Sigma$ such that $f_G(s)=1$, if
such an element exists. 
This problem can be solved by  repeating Grover's standard search~\cite{GroverSTOC96}
a logarithmic number of times, and checking if a solution has been found. 
For any constant $c>0$, 
this quantum procedure (called Safe Grover Search in~\cite{Magniez+SICOMP07}) uses 
$O(t\sqrt{m} \log m)$ queries to $\mathcal{O}_G$, outputs an element 
$s\in\Sigma$ such that $f_G(s)=1$ with probability at least $1-1/m^c$  if such an element 
exists, and always rejects if no such element exists. 
The same bound can actually be obtained even if, for each $s\in \Sigma$, the 
value $f_G(s)$ obtained using $t$ queries to $\mathcal{O}_G$ is correct only 
with high (e.g., greater than $2/3$) probability~\cite{Hoyer+ICALP03}.

\paragraph{Variable costs quantum search.}
Let $\Sigma$ be again a finite set of size $m$.
Consider a Boolean function $f_G\colon\Sigma\to \{0,1\}$ and assume that,
for each $s\in \Sigma$, there exists a quantum algorithm $\mathcal{B}_s$ 
making queries to $\mathcal{O}_G$ that satisfies the following properties:
\begin{itemize}
\item
$\mathcal{B}_s$ uses at most $t_s$ queries to $\mathcal{O}_G$; 
\item
$\mathcal{B}_s$ outputs $f_G(s)$ with probability at least $2/3$.
\end{itemize}
The goal is again to find some element $s\in \Sigma$ such that $f_G(s)=1$, if
such an element exists. Note that Grover search 
would lead to a quantum algorithm with query complexity $O(t_{max}\sqrt{m} \log m)$,
where $t_{max}$ represents the maximal value of $t_s$ over $s\in \Sigma$.
Ambainis \cite{Ambainis10} has shown how to do better when the square root of the 
average of the squares of the costs 
is significantly less than $t_{max}$.
We state this result in the following theorem where, for simplicity,
we assume that both $m$ and $t_{max}$ are upper bounded by 
a polynomial of $n$ (the number of vertices in the graph).

\begin{theorem}{(\cite{Ambainis10})}\label{theorem:ambainis}
Assume that there exists a constant $c$ such that $m\le n^c$
and $t_{max}\le n^c$.
There exists a quantum algorithm that makes 
\[
\tilde O\left(
\sqrt{ \sum_{s\in\Sigma} t_s^2}
\right)
\]
queries to $\mathcal{O}_G$ and
finds, with probability at least $3/4$, an element $s\in\Sigma$ such that $f_G(s)=1$  if such an element
exists.
\end{theorem}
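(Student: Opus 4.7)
The plan is to reduce the variable-cost search problem to several constant-cost Grover searches by grouping indices according to their costs, and then use Cauchy--Schwarz to bound the sum of the individual bucket costs by the claimed $\tilde O\bigl(\sqrt{\sum_s t_s^2}\bigr)$ quantity.

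First, I would bucket the elements of $\Sigma$ by cost. Set $L = \lceil c\log_2 n\rceil + 1$ and, for $j=1,\ldots,L$, define
\[
B_j \;=\; \{\, s\in\Sigma \;:\; 2^{j-1} \le t_s < 2^j \,\},
\qquad m_j = |B_j|,\qquad T_j = 2^j.
\]
The hypothesis $t_{\max}\le n^c$ guarantees that every index lies in some bucket, and $L = O(\log n)$. By construction every $s\in B_j$ satisfies $t_s \le T_j \le 2 t_s$.

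Second, I would search each bucket separately using Safe Grover Search over $B_j$, where the checker at index $s$ is the algorithm $\mathcal{B}_s$. Because each $\mathcal{B}_s$ is only a bounded-error routine, I would first boost its success probability to $1 - n^{-c'}$ for a sufficiently large constant $c'$ by $O(\log n)$ independent repetitions and majority vote (alternatively, invoking the result of \cite{Hoyer+ICALP03} cited in the preliminaries, which absorbs bounded-error checkers directly into Grover). The amplified checker still uses $\tilde O(T_j)$ queries per invocation, so Grover search over $B_j$ costs $\tilde O(T_j\sqrt{m_j})$ queries and, with probability $1 - o(1)$, returns a marked element of $B_j$ if one exists.

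Third, I would run these $L$ bucket-searches sequentially (in any order) and output any marked element found. A union bound over the $L = O(\log n)$ buckets keeps the overall failure probability below $1/4$. The total query count is
\[
\sum_{j=1}^{L} \tilde O\!\bigl(T_j\sqrt{m_j}\bigr),
\]
and Cauchy--Schwarz over the $L$ buckets yields
\[
\sum_{j=1}^{L} T_j\sqrt{m_j}
\;\le\; \sqrt{L}\cdot\sqrt{\sum_{j=1}^{L} T_j^2 m_j}
\;\le\; \sqrt{L}\cdot 2\sqrt{\sum_{s\in\Sigma} t_s^2},
\]
where the last inequality uses $T_j^2 m_j \le 4\sum_{s\in B_j} t_s^2$. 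Since $\sqrt{L} = O(\sqrt{\log n})$ is polylogarithmic in $n$, it is absorbed by the $\tilde O$ notation, giving the claimed bound.

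The main technical care lies in the second step: standard Grover search assumes a reversible, perfectly-correct checker, whereas here the checker $\mathcal{B}_s$ is a bounded-error quantum algorithm whose cost $t_s$ varies with $s$. I would handle this by the amplification-plus-union-bound argument sketched above, which is affordable precisely because the hypotheses $m\le n^c$ and $t_{\max}\le n^c$ let the logarithmic amplification overhead be absorbed into $\tilde O$. A sharper proof (avoiding even the $\sqrt{\log n}$ factor, as in Ambainis' original paper) would replace the sequential bucket scan by a single amplitude-amplification procedure that interleaves checkers of all costs; but for the $\tilde O$ version stated here, the bucketing plus Cauchy--Schwarz suffices.
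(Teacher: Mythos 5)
The paper itself gives no proof of this statement: Theorem~\ref{theorem:ambainis} is imported directly from Ambainis \cite{Ambainis10}, so the comparison is between your argument and Ambainis' original one. Your bucketing proof is essentially correct for the theorem as stated: dyadic cost classes $B_j$ with $L=O(\log n)$ (note you should fold the elements with $t_s<2$, including $t_s=0$, into the first bucket), Grover search on each class with bounded-error checkers handled via amplification or \cite{Hoyer+ICALP03}, verification of any returned element to suppress false positives from unmarked buckets, and Cauchy--Schwarz together with $T_j\le 2t_s$ on $B_j$ indeed give $\tilde O\bigl(\sqrt{\sum_{s\in\Sigma}t_s^2}\bigr)$, the $\sqrt{L}$ and amplification overheads being absorbed into $\tilde O$ because $m\le n^c$ and $t_{\max}\le n^c$. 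What your route buys is elementarity; what it gives up relative to Ambainis' amplitude-amplification argument is, first, the polylogarithmic overhead (his bound has no log factors, though that is irrelevant for the $\tilde O$ statement) and, second, the extension emphasized in the remark right after the theorem: Ambainis' algorithm does not need to know the costs $t_s$ in advance, whereas your bucketing requires knowing each $t_s$ up to a factor $2$ in order to place $s$ in a bucket. That extension is what the paper actually uses in the proof of Proposition~\ref{proposition:checking}, where the cost $Q(w)$ of the sub-algorithm for a vertex $w$ depends on $|\mydelta{A,w}|$ and is not known beforehand; so your proof covers the theorem as literally stated (where the $t_s$ are given in the hypothesis) but not the unknown-cost strengthening the paper later relies on.
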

As shown in \cite{Ambainis10}, it is not necessary to know the costs $t_s$ to obtain the complexity
stated in this theorem. Note that, while the formal statement of this theorem in \cite{Ambainis10} assumes that
the algorithms~$\mathcal{B}_s$ always output the correct answers, the case we consider (where each $\mathcal{B}_s$ outputs the correct answer only with high probability) is explicitly 
treated in Section 5 of \cite{Ambainis10}.

\paragraph{Quantum walk search.}
We now describe quantum walk search. For concreteness, 
we will restrict ourself to quantum walks over Johnson graphs, 
since they will be sufficient to obtain our results. We refer to  
\cite{Magniez+SICOMP11} for a more detailed and general 
treatment of the concept of quantum walks.

We start by defining Johnson graphs.
\begin{definition}
Let $T$ be a finite set and $r$ be a positive integer such that 
$r\le |T|$. The Johnson graph $J(T,r)$ is the undirected graph with vertex
set $\Ss(T,r)$ where two vertices
$R_1,R_2\in\Ss(T,r)$ are connected if and only if $|R_1\cap R_2|=r-1$.
\end{definition}

We now describe the kind of search problems related to a graph $G$ of $n$ vertices 
given as an oracle~$\mathcal{O}_G$ that can be solved using a quantum walk over a Johnson graph.
Let $T$ be a finite set and $r$ be a positive integer such that $r\le |T|$. 
For simplicity, we will assume that there exists a constant $c$ such that $|T|\le n^c$.
Let $f_G\colon \Ss(T,r)\to\{0,1\}$ be a Boolean function depending on~$G$, 
and write $M_G=f_G^{-1}(1)$.
The goal is to decide whether $M_G$ is empty or not, i.e., 
whether there exists some $R\in\Ss(T,r)$ such that $f_G(R)=1$.
Note that the search problem considered here is defined by the function $f_G$
(or, equivalently, by $M_G$),
and the input of this search problem is the graph~$G$. Its query complexity
corresponds to the number of queries to $\mathcal{O}_G$ needed
to decide whether $M_G$ is empty or not.

The above search 
problem can be solved using a quantum walk over the Johnson graph $J(T,r)$.
A state of the walk 
will correspond to a vertex (i.e., to a set $A\in\Ss(T,r)$), and 
a data structure $D_G(A)$, which in general depends on $G$,
will be associated to each state $A$. We say that the state $A$ 
is marked if $A\in M_G$.
Three types of cost are associated with $D_G$, all measured in the number of queries to $\mathcal{O}_G$.
The setup cost $\mathsf{S}$ is the cost to set up the data structure, i.e.,  the 
number of queries needed to construct $D_G(A)$ for a given vertex $A\in\Ss(T,r)$.
The update cost $\mathsf{U}$ is the cost to update the data structure, i.e.,  the number
of queries needed to convert $D_G(A)$ into $D_G(A')$ for two given connected vertices $A$ and $A'$ of $J(T,r)$.
The checking cost $\mathsf{C}$ is the cost of checking with probability greater than $2/3$,
given $A\in\Ss(T,r)$ and $D_G(A)$, 
if $A$ is marked.

Let $\varepsilon>0$ be such that, for all graphs $G$, the inequality
\[
\frac{|M_G|}{|\Ss(T,r)|}\ge \varepsilon
\]
holds whenever $M_G\neq\emptyset$. Ambainis \cite{AmbainisSICOMP07} has 
shown that the quantum walk over $J(T,r)$ described above 
will find with high probability an element in $M_G$, if such an element exists, using a number of 
queries of order $\mathsf{S}+\frac{1}{\sqrt{\varepsilon}}\left(\sqrt{r}\times \mathsf{U}+\mathsf{C}\right)$,
see also \cite{Magniez+SICOMP11} for discussions and generalizations.
For later reference, we state this result as the following theorem.
\begin{theorem}[\cite{AmbainisSICOMP07,Magniez+SICOMP11}]\label{theorem:walks}
The quantum walk over the Johnson graph $J(T,r)$ has query complexity
\[
\tilde O\left(\mathsf{S}+\frac{1}{\sqrt{\varepsilon}}\left(\sqrt{r}\times \mathsf{U}+\mathsf{C}\right)\right)
\]
and finds, with probability at least $3/4$, an element in $M_G$ if such an element exists.
\end{theorem}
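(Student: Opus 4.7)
The plan is to build a highly adaptive algorithm that combines three ingredients: a cheap classical preprocessing stage, variable-costs quantum search (Theorem~\ref{theorem:ambainis}), and a quantum walk on a Johnson graph (Theorem~\ref{theorem:walks}). The target budget is $\tilde O(n^{5/4})$ queries, and the key is to choose parameters so that a combinatorial dichotomy, meaningful only for unweighted graphs, lets two quantum subroutines cover all cases within this budget.

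First I would fix a parameter $r=\Theta(n^{1/4})$ and sample uniformly at random a set $A\subseteq V$ with $|A|=r$. Using $\mathcal{O}_G$, I would query every pair in $A\times V$, which costs $O(rn)=O(n^{5/4})$ queries and produces, for every $v\in V$, the classical list $N(v)\cap A$. This ``$A$-fingerprint'' will classify vertices and guide the quantum stages: for instance $d_A(v):=|N(v)\cap A|$ is a concentrated estimate of the normalized degree of $v$, and more generally the fingerprints let me recognise pairs $\{u,v\}$ whose endpoints share many $A$-neighbours.

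The combinatorial heart is then a dichotomy on the unknown triangle $\{x,y,z\}$. Fixing a threshold $\tau$ depending polynomially on $n$, I would classify the triangle as either \emph{exposed}---meaning the preprocessing already witnesses much of its local structure, e.g.\ because $A$ contains a common neighbour of two of its vertices, or because one of the fingerprints $N(x)\cap A$ is large---or \emph{hidden}. Counting arguments specific to the unweighted case (common neighbourhoods are subsets of vertices, not weighted tuples) would show that, with probability $\ge 2/3$ over $A$, whenever a triangle exists either some triangle is exposed or no triangle is hidden. In the exposed case I would use variable-costs Grover search (Theorem~\ref{theorem:ambainis}) over candidate ``pivot'' vertices $v$, with cost $t_v$ dictated entirely by the classical fingerprint of $v$; the counting argument bounds $\sum_v t_v^2=\tilde O(n^{5/2})$ and hence the subroutine uses $\tilde O(n^{5/4})$ queries. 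In the hidden case I would run a quantum walk on $J(V',r')$ for an $A$-determined subset $V'\subseteq V$ and a size $r'$ chosen so that the setup, update, and checking costs from Theorem~\ref{theorem:walks} balance at $\tilde O(n^{5/4})$; the data structure kept on each walk state $R$ would encode the bipartite subgraph between $R$ and the parts already explored in Stage~1, so that update queries are cheap.

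The main obstacle is precisely this combinatorial dichotomy together with the simultaneous parameter balancing. One must choose $\tau$ and $r'$ so that (a) the two cases are jointly exhaustive with probability $\ge 2/3$ over the random choice of $A$; (b) the variable-costs sum $\sum_v t_v^2$ in the exposed case is genuinely $\tilde O(n^{5/2})$; and (c) in the hidden case the marked-state fraction $\varepsilon$ in Theorem~\ref{theorem:walks} is large enough that the factor $1/\sqrt{\varepsilon}$ does not push the walk cost past $n^{5/4}$. Threading all three through the single scale $r=n^{1/4}$, using counting identities for unweighted graphs that break under arbitrary edge weights, is what separates $\tilde O(n^{5/4})$ from the $O(n^{9/7})$ barrier of the (non-adaptive) learning-graph approach. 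Amplifying the overall success probability to $2/3$ is then standard: repeat the whole procedure a constant number of times and output any triangle produced.
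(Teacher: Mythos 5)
Your proposal does not address the statement you were asked to prove. The statement is Theorem~\ref{theorem:walks}: a generic bound on the query complexity of quantum walk search over a Johnson graph $J(T,r)$, namely $\tilde O\bigl(\mathsf{S}+\tfrac{1}{\sqrt{\varepsilon}}\bigl(\sqrt{r}\times \mathsf{U}+\mathsf{C}\bigr)\bigr)$ with success probability at least $3/4$. What you sketched instead is an algorithm for the triangle finding problem (essentially an attempt at Theorem~\ref{theorem:main}), and in doing so you explicitly \emph{invoke} Theorem~\ref{theorem:walks} as a black box in your ``hidden'' case. As an argument for the statement at hand this is circular: nothing in your write-up analyzes the walk itself, so no part of the claimed bound is established. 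Note also that in the paper this theorem is not proved but cited from Ambainis and from Magniez, Nayak, Roland and Santha; if one were asked to prove it, the proof would have to go through the walk machinery: the spectral gap of the Johnson graph $J(T,r)$ is $\delta=\Omega(1/r)$, one round of (phase-estimation-based) reflection about the stationary distribution costs $O(1/\sqrt{\delta})=O(\sqrt{r})$ update operations, i.e.\ $O(\sqrt{r}\,\mathsf{U})$ queries, each amplification round additionally pays the checking cost $\mathsf{C}$, the number of rounds is $O(1/\sqrt{\varepsilon})$ since the marked fraction is at least $\varepsilon$, and a single setup of cost $\mathsf{S}$ is paid once; constant success probability then follows from the amplitude-amplification analysis. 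None of these elements (spectral gap, reflection subroutine, iteration count, error analysis) appears in your proposal.

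Separately, even read as a sketch of the main triangle-finding result, your argument is not the paper's and has unsupported steps: the claimed dichotomy (``exposed'' versus ``hidden'' triangles with respect to a random $A$ of size $n^{1/4}$), the bound $\sum_v t_v^2=\tilde O(n^{5/2})$, and the lower bound on the marked fraction $\varepsilon$ in your walk are all asserted without proof, and the paper's actual route is different: it samples $X$ of size $\tilde\Theta(\sqrt{n})$ to make $X$ $k$-good (Lemma~\ref{lemma:sparse}), eliminates triangles meeting $X$ by Grover search, and then runs a walk over $J(V,\ceil{n^{3/4}})$ whose checking procedure (Proposition~\ref{proposition:checking}) combines variable-cost search over $w$ with an inner walk over $J(A,\ceil{n^{1/2}})$, using the size-estimation Lemma~\ref{lemma:estimation} and the concentration Lemma~\ref{lemma:fraction}. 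But the primary issue remains that your submission proves the wrong statement.
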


\section{Overview of our algorithm}\label{sec:overview}
In this section we give an outline 
of the main ideas leading to our new quantum algorithm for triangle 
finding. The algorithm is described in details, and its query complexity 
rigorously analyzed, in Section~\ref{section:main}.

Let $G=(V,E)$ denote the undirected and unweighted graph that is the 
input of the triangle finding problem, and write $n=|V|$. 
For any vertex $u\in V$, we 
denote
\[
N_G(u)=\{v\in V\:|\: \{u,v\}\in E\}
\] 
the set of neighbors of $u$.

The algorithm first takes a set $X\subseteq V$ consisting of $ \Theta(\sqrt{n}\log n)$ vertices chosen 
uniformly at random from $V$, and checks if there exists a triangle of $G$ with a vertex in~$X$.
This can be checked, with high probability,
using Grover's quantum search \cite{GroverSTOC96} in
\[
O\left(\sqrt{|X|\times |\E(V)|}\right)=\tilde O\left(n^{5/4}\right)
\]
queries. Define 
\[
S=\bigcup_{u\in X}\E(N_G(u)).
\]
If no triangle has been reported, 
we know that any triangle of~$G$
must have an edge in the set
$
\E(V)\setminus S.
$
Note that the above preliminary step has already been used in prior works, in particular related to 
the design of combinatorial algorithms
for Boolean matrix multiplication (e.g., \cite{Bansal+ToC12,Schnorr+RANDOM98}) and even in 
the design of the $\tilde O(n^{10/7})$-query quantum algorithm for triangle finding 
in \cite{Magniez+SICOMP07,Szegedy03}. 
We now explain how to check whether $\E(V)\setminus S$ contains an edge of a triangle or not,
which is the novel contribution of this paper.

For any set $Y\subseteq V$ and any $w\in V$, let us define the set 
$\mydelta{Y,w}\subseteq \E(Y)$ as follows:
\[
\mydelta{Y,w}=\E(Y\cap N_G(w))\setminus S.
\]
It is easy to see that, with high probability on the choice of~$X$, for any $\{v,v'\}\in \E(V)\setminus S$ 
the inequality 
\[
\left|\{w\in V\:|\:\{v,w\}\in E \textrm { and } \{v',w\}\in E\}\right|\le \sqrt{n}
\]
holds -- the preliminary step of the previous paragraph was done precisely to obtain this sparsity condition. 
This implies that, for a vertex $w$ taken uniformly at random in~$V$, the expected 
size of $\mydelta{V,w}$ is at most $n^{3/2}$ and, more generally, for a random set $Y\subseteq V$
the expected size of $\mydelta{Y,w}$ is at most $|Y|^2/\sqrt{n}$ (see Lemma \ref{lemma:sparse} in 
Section~\ref{section:main}).
In this section we will describe our algorithm in the following situation:
there exists a positive constant $c$ such that
\begin{equation}\label{cond:outline}
\left|\mydelta{Y,w}\right|
\le  \frac{c|Y|^2}{\sqrt{n}}
\hspace{3mm}
\textrm{ for any $Y\subseteq V$ and any $w\in V$}.
\end{equation}
This assumption considerably simplifies the problem, 
eliminating several difficulties that the final algorithm will need to deal with,
but still represents a situation sufficiently non-trivial to enable us to describe 
well the main ideas of our algorithm.

Remember that we now want to check if $\E(V)\setminus S$ contains an edge 
of a triangle. Our key observation is the following. Given 
a vertex $w\in V$ and a set $B\subseteq V$ of size $\ceil{\sqrt{n}}$ such that 
$\mydelta{B,w}$ is known, we can check if there exists a pair $\{v_1,v_2\}\in\E(B)\setminus S$ 
such that $\{v_1,v_2,w\}$ is a triangle of $G$ with 
\[
O\left(\sqrt{|\mydelta{B,w}|}\right)=
O\left(\sqrt{\frac{c|B|^2}{\sqrt{n}}}\right)=O(n^{1/4})
\] 
queries using Grover search and Condition (\ref{cond:outline}), 
since such $\{v_1,v_2\}$ exists 
if and only if $\mydelta{B,w}\cap E\neq\emptyset$.
The remarkable point here is that, 
if there were no sparsity condition on $\mydelta{B,w}$
then this search would require $\Theta(\sqrt{|B|^2})=\Theta(\sqrt{n})$ queries.
This improvement from $\sqrt{n}$ to $n^{1/4}$ is one of the main
reasons why we obtain an algorithm for triangle finding with 
query complexity $\tilde O(n^{5/4})$ instead of $O(n^{3/2})$
using straightforward quantum search. 
Note that this observation, even combined with the other ideas we describe below,
does not seem to lead to efficient classical  
algorithms for triangle finding or Boolean matrix multiplication due to the large cost
required to construct $\mydelta{B,w}$ -- this is why it has not been exploited prior to the present work.
One of our main contributions is indeed
to show that, in the query complexity setting, a quantum algorithm can perform
this construction efficiently.

As just mentioned,
the main difficulty when trying to exploit the above observation 
is that we not only want now to find 
a vertex $w$ and a set $B$ for which there exists 
$\{v_1,v_2\}\in\E(B)\setminus S$ 
such that $\{v_1,v_2,w\}$ is a triangle, we also need to 
construct the set $\mydelta{B,w}$,
which requires additional queries. 
To deal with this problem, we use a quantum walk over a Johnson graph, 
which enables us to
implement the construction of $\mydelta{B,w}$ concurrently to the search of $B$
and~$w$. By carefully analyzing the resulting
quantum walk algorithm, we can show that the improvement by a factor~$n^{1/4}$ 
described in the previous paragraph
is still preserved as long as we have enough
prior information about the set~$S$ when executing the quantum walk.

The difficulty now is that loading enough information about $S$ during the execution of the quantum
walk is too costly. Moreover, constructing $S$ before executing the quantum walk 
requires $\Theta(n^{3/2})$ queries, which is too costly as well. To solve this difficulty, 
we first search, using another quantum walk on another Johnson graph, a set $A\subseteq V$ of size $\ceil{n^{3/4}}$ 
such that 
\[
\left(\bigcup_{w\in V}\mydelta{A,w}\right)\cap E \neq\emptyset, 
\]
and concurrently 
construct the set $\E(A)\setminus S$.
We then do exactly  
as in the previous paragraph, but taking~$B$ as a subset of $A$ instead of as a subset of $V$. 
Since $\mydelta{B,w}$ can be created efficiently from the knowledge of 
$\E(A)\setminus S$, and $\E(A)\setminus S$ is available 
in the memory of the new quantum walk, the problem mentioned in the previous 
paragraph is solved.  By carefully designing the new quantum walk, 
we can show that its query complexity is sufficiently small.
As an illustration of this claim, observe that constructing the set $\E(A)\setminus S$ for 
a given set $A\subseteq V$ of size $\ceil{n^{3/4}}$, which will be done by the quantum walk during its setup stage,
can be implemented using 
\[
O\left(|A|\times |X|\right)=\tilde O(n^{5/4})
\] 
queries by checking if $\{u,v\}\in E$ for all 
$u\in A$ and all $v\in X$.

To summarize, at a high-level our strategy to check if $\E(V)\setminus S$ contains an edge 
of a triangle, and thus check if $G$ contains a triangle, can be described as the following four-level recursive procedure:
\begin{itemize}
\item[1.]
Search for a set $A\subseteq V$ of size $\ceil{n^{3/4}}$ 
such that $\left(\bigcup_{w\in V}\mydelta{A,w}\right)\cap E \neq\emptyset$,
while concurrently constructing $\E(A)\setminus S$, using a quantum walk;
\item[2.]
Search for a vertex $w\in V$ such that $\mydelta{A,w}\cap E\neq\emptyset$;
\item[3.]
Search for a set $B\subseteq A$ of size $\ceil{\sqrt{n}}$ 
such that $\mydelta{B,w}\cap E\neq\emptyset$,
while concurrently constructing $\mydelta{B,w}$,
using a quantum walk and the fact that $\E(A)\setminus S$ has already been constructed;
\item[4.]
Check if $\mydelta{B,w}\cap E\neq\emptyset$ in
$O(n^{1/4})$ queries, using the fact that $\mydelta{B,w}$ has already been constructed.
\end{itemize}
  
Several technical difficulties arise when analyzing the performance
of this recursive quantum algorithm and showing that its query complexity
is $\tilde O(n^{5/4})$, especially when 
Condition~(\ref{cond:outline}) does not hold. They are dealt with by  
using additional quantum techniques, such as quantum search with variable costs, 
estimating the size of the 
involved sets by random sampling, 
and proving several concentration bounds.
Note that the order of the four levels of recursion in our algorithm is crucial
to guarantee the $\tilde O(n^{5/4})$ query complexity, and it does 
not seem that allowing further nesting in the quantum walks  
(e.g., using the recent concept of quantum nested walk
\cite{Belovs+ICALP13,Jeffery+SODA13}) can be used to 
further reduce the query complexity of our approach.

\section{Quantum Algorithm for Triangle Finding}\label{section:main}
In this section we prove Theorem \ref{theorem:main} by describing our quantum algorithm for triangle finding.

As in Section \ref{sec:overview},  $G=(V,E)$ will denote the undirected and unweighted graph that is the 
input of the triangle finding problem, and we write $n=|V|$. For any sets $X,Y\subseteq V$, we define the set $\mydelta{Y}\subseteq \E(Y)$
as follows:
\[
\mydelta{Y}=\E(Y)\setminus \bigcup_{u\in X}\E(N_G(u)),
\]
where $N_G(u)$ again denotes the set of neighbors of $u$.
As in Section \ref{sec:overview},
for any sets $X,Y\subseteq V$ and any vertex $w\in V$, we define the set $\mydelta{Y,w}\subseteq \mydelta{Y}$
as follows:
\begin{align*}
\mydelta{Y,w}&=\E(Y\cap N_G(w))\setminus \bigcup_{u\in X}\E(N_G(u))\\
&=\Big\{\{u,v\}\in \mydelta{Y}\:|\: \{u,w\}\in E \textrm { and } \{v,w\}\in E\Big\}.
\end{align*}

\subsection{Main algorithm and proof of Theorem \ref{theorem:main}}

A key combinatorial property related to the triangle finding problem 
that we use in this paper is highlighted in the following definition
of \emph{$k$-good sets}.

\begin{definition}
Let $k$ be any constant such that $0\le k\le 1$. 
A set $X\subseteq V$ is $k$-good for $G$ if 
the inequality
\[
\sum_{w\in V}\left|\mydelta{Y,w}\right|\le |Y|^2 n^{1-k}
\]
holds for all $Y\subseteq V$.
\end{definition}

Our algorithm will rely on the following observation, which shows 
that $k$-good sets for $G$ can be constructed very easily.
\begin{lemma}\label{lemma:sparse}
Let $k$ be any constant such that $0\le k\le 1$. Suppose that $X$ is a set obtained 
by taking uniformly at random, with replacement,  $\ceil{3n^{k}\log n}$ elements from $V$.
Then $X$ is $k$-good for $G$ with probability at least $1-1/n$.
\end{lemma}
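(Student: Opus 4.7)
The plan is to expand the sum combinatorially over pairs of vertices, identify which pairs can possibly contribute, and use a union bound to rule out the heavy contributors.

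First I would rewrite the quantity of interest. For any fixed $Y \subseteq V$ and $X \subseteq V$, swapping the order of summation gives
\[
\sum_{w\in V}\left|\mydelta{Y,w}\right|
= \sum_{\{v,v'\}\in\E(Y)} c(v,v')\cdot \mathbf{1}\bigl[X\cap N_G(v)\cap N_G(v')=\emptyset\bigr],
\]
where $c(v,v')=|N_G(v)\cap N_G(v')|$ denotes the number of common neighbors of $v$ and $v'$. Indeed, an edge $\{v,v'\}\in\E(Y)$ belongs to $\mydelta{Y,w}$ precisely when $w$ is a common neighbor of $v$ and $v'$ and additionally no $u\in X$ has both $v$ and $v'$ in $N_G(u)$ (the second condition being equivalent to $X\cap N_G(v)\cap N_G(v') = \emptyset$).

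Next I would split pairs according to their number of common neighbors. Call a pair $\{v,v'\}\in\E(V)$ \emph{heavy} if $c(v,v')>n^{1-k}$ and \emph{light} otherwise. For a heavy pair, each of the $\lceil 3n^k\log n\rceil$ independent samples drawn uniformly from $V$ hits $N_G(v)\cap N_G(v')$ independently with probability $c(v,v')/n > n^{-k}$, so
\[
\Pr\bigl[X\cap N_G(v)\cap N_G(v')=\emptyset\bigr]
\le (1-n^{-k})^{3n^k\log n}
\le e^{-3\log n}
= n^{-3}.
\]
Since $|\E(V)|\le n^2/2$, a union bound over the at most $n^2/2$ heavy pairs shows that, with probability at least $1-1/n$, the indicator in the displayed sum above vanishes on every heavy pair simultaneously.

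Conditioned on this good event, only light pairs contribute, and each contributes at most $c(v,v')\le n^{1-k}$. Therefore, for every $Y\subseteq V$,
\[
\sum_{w\in V}\left|\mydelta{Y,w}\right|
\le \sum_{\{v,v'\}\in\E(Y)} n^{1-k}
\le \frac{|Y|^2}{2}\, n^{1-k}
\le |Y|^2 n^{1-k},
\]
which is exactly the $k$-goodness inequality, uniformly in $Y$. I do not expect any serious obstacle; the only minor subtlety is ensuring that the bound on heavy-pair survival is $\le n^{-3}$ so that the union bound over $O(n^2)$ pairs yields the $1-1/n$ failure probability stated in the lemma, which is where the factor $3$ in $\lceil 3n^k\log n\rceil$ is used.
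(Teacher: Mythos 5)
Your proof is correct and follows the same strategy as the paper: bound the probability that any pair with more than $n^{1-k}$ common neighbors survives into $\mydelta{V}$, union bound over at most $n^2/2$ such pairs, then bound the sum by counting. The only difference is that you explicitly spell out, via the swap-of-summation identity, the ``straightforward counting argument'' that the paper leaves implicit.
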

\begin{proof}
Consider a pair $\{u,v\}\in\E(V)$ such that  
\[
\left|\{w\in V\:|\:\{u,w\}\in E \textrm { and } \{v,w\}\in E\}\right|> n^{1-k}.
\]
Let us write $T=\{w\in V\:|\:\{u,w\}\in E \textrm { and } \{v,w\}\in E\}$.
This pair is contained in $\mydelta{V}$ if and only if
$T \cap X= \emptyset$, which happens with probability
\[
\left(1-\frac{|T|}{n}\right)^{\ceil{3 n^k \log n}}<
\left(1-\frac{1}{n^k}\right)^{3 n^k\log n}\le \frac{1}{n^3}.
\]
By the union bound this implies that
with probability at least $1-\frac{1}{n}$
the inequality 
\[
\left|\{w\in V\:|\:\{u,w\}\in E \textrm { and } \{v,w\}\in E\}\right|\le n^{1-k}
\]
holds for all $\{u,v\}\in\mydelta{V}$.
The statement of the lemma then follows from a straightforward 
counting argument.
\end{proof}

In Section \ref{subsec:prop} we will prove the following proposition.
\begin{proposition}\label{proposition:checking}
Let $a$ and $k$ be two constants such that $0<a,k<1$.
Let $X$ be a known subset of~$V$ of size at most 
$\ceil{3n^{k}\log n}$ that is $k$-good for $G$.
There exists a quantum algorithm with query complexity
\[
\tilde O\left(
n^{1/2+k}+n^{1/2+2a/3}+n^{1/2+a-k/2}
\right)
\]
that, given as input 
a set 
$A\in\Ss(V,\ceil{n^a})$ and the set $\mydelta{A}$,
checks with probability at least $2/3$ if
$\mydelta{A}$ contains an edge of a triangle of~$G$.
\end{proposition}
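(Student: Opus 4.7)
The plan is a two-level nested quantum search. The outer level searches over $w \in V$ using Ambainis' variable-cost quantum search (Theorem \ref{theorem:ambainis}); the inner level, for each fixed $w$, runs a quantum walk on the Johnson graph $J(A, r)$ to decide whether $\mydelta{A, w} \cap E$ is nonempty. I will take $r = \ceil{n^{2a/3}}$, which balances the setup cost of the walk against its walking part $\tfrac{1}{\sqrt{\varepsilon}} \sqrt{r}\, \mathsf{U}$ under the marked-fraction bound $\varepsilon = \Theta(r^2/|A|^2) = \Theta(n^{-2a/3})$ coming from the Johnson structure.

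For the inner walk a vertex is a set $B \in \Ss(A, r)$ and the data structure stores $\mydelta{B, w}$. Because $\mydelta{A}$ is already in memory, $\mydelta{B} = \mydelta{A} \cap \E(B)$ comes for free; only $B \cap N_G(w)$ must be queried, giving setup cost $\mathsf{S} = O(r)$ together with an $O(n^k)$ overhead for auxiliary querying tied to $X$, update cost $\mathsf{U} = O(1)$, and checking cost $\mathsf{C}_w = O(\sqrt{|\mydelta{B, w}|})$ via Grover over $\mydelta{B, w}$. Theorem \ref{theorem:walks} then yields an inner walk cost $T_w = \tilde O(n^{2a/3} + n^k + n^{a/3}\, \mathsf{C}_w)$ for the search conditioned on $w$.

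The crucial step is to bound $\mathsf{C}_w$ via the $k$-good property of $X$. A concentration argument on the uniform random subset $B$ — $|B \cap N_G(w)|$ concentrates around $r \cdot |A \cap N_G(w)|/|A|$, so $|\mydelta{B, w}|$ concentrates around $(r/|A|)^2 |\mydelta{A, w}|$ — lets me replace $\mathsf{C}_w$ by $\tilde O(n^{-a/3}\sqrt{|\mydelta{A, w}|})$ and hence $T_w = \tilde O(n^{2a/3} + n^k + \sqrt{|\mydelta{A, w}|})$. Plugging this into Theorem \ref{theorem:ambainis} and using $\sum_{w \in V} |\mydelta{A, w}| \le |A|^2 n^{1-k} = n^{2a+1-k}$ from $k$-goodness, the outer variable-cost search has query complexity
\[
\tilde O\!\left(\sqrt{\sum_{w \in V} T_w^2}\right) \le \tilde O\!\left(\sqrt{n \cdot n^{4a/3} + n \cdot n^{2k} + n^{2a+1-k}}\right) = \tilde O\!\left(n^{1/2+2a/3} + n^{1/2+k} + n^{1/2+a-k/2}\right),
\]
matching the claimed bound.

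The main obstacle is the clash between the uniform (worst-case) bound on $\mathsf{C}$ required by Theorem \ref{theorem:walks} and the concentration bound above, which only holds for typical $B$. I expect this to be resolved either by pre-estimating $|\mydelta{A, w}|$ via a short random sample — fixing the walk's checking budget in advance and safely rejecting any $B$ that exceeds it — or by proving a high-probability tail bound on $|B \cap N_G(w)|$ uniformly along the walk trajectory, so that the effective checking cost stays within the stated bound with overwhelming probability and the resulting error is absorbed into the $2/3$ success guarantee.
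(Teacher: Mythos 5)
Your high-level plan---outer variable-cost search over $w\in V$ via Theorem~\ref{theorem:ambainis}, inner quantum walk on $J(A,\ceil{n^{2a/3}})$ with the data structure storing $\mydelta{B,w}$, Grover check over that set---is exactly the paper's architecture, and you correctly identify the central obstacle: Theorem~\ref{theorem:walks} demands a worst-case checking cost, while $|\mydelta{B,w}|$ is only controlled for typical $B$. But the two steps you leave as ``expected'' resolutions are exactly where the real technical content lies, and one of them hides a subtlety your sketch does not address.

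First, the pre-estimation of $|\mydelta{A,w}|$ cannot be a garden-variety randomized subroutine, because then the checking budget of the inner walk is itself a random variable and Theorem~\ref{theorem:walks} cannot be invoked with a fixed $\mathsf{C}$. The paper's Lemma~\ref{lemma:estimation} avoids this by packaging the estimator as a \emph{deterministic} algorithm $\mathcal{A}(s,w)$ that consumes a random string $s$ as explicit input; one draws $s$ once and shows that, with probability $1-O(1/n)$ over $s$, the estimates are simultaneously good for all $w\in V$. Second, and more seriously, your concentration statement that $|\mydelta{B,w}|$ concentrates around $(r/|A|)^2|\mydelta{A,w}|$ is for a \emph{uniformly random} $B\in\Ss(A,r)$, whereas the walk's marked-fraction bound requires that, \emph{conditioned on $B$ containing the particular pair $\{v_1,v_2\}$ that forms a triangle with $w$}, the size $|\mydelta{B,w}|$ still falls under the budget with constant probability. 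Conditioning on $v_1,v_2\in B$ forces two vertices of $B$ into $N_G(w)$, which biases $|\mydelta{B,w}|$ upward and adds an $O(r)$ term; one must therefore bound the \emph{joint} probability of ``$\{v_1,v_2\}\subseteq B$'' and ``$|\mydelta{B,w}|$ small'' directly. That is what Lemma~\ref{lemma:fraction} does, and it is what justifies redefining ``marked'' to require both conditions while still guaranteeing $\varepsilon=\Omega(n^{-2a/3})$. Without it, adding the size condition could a priori make the marked set empty even when a triangle edge is present, and the walk would silently fail. (Your alternative resolution---a tail bound ``along the walk trajectory''---does not fit the formalism, since the walk is a superposition over $\Ss(A,r)$, not a classical trajectory one can condition on.) So the approach is the right one, but the proof as written has genuine gaps precisely at the paper's Lemmas~\ref{lemma:estimation} and~\ref{lemma:fraction}.
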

Proposition \ref{proposition:checking} shows the existence of 
a quantum algorithm 
that checks efficiently if a known set $\mydelta{A}$ contains an
edge of a triangle of $G$, under the assumption that $X$ is 
$k$-good for $G$. With this result available, 
we are now ready to construct our $\tilde O(n^{5/4})$-query quantum algorithm 
for triangle finding.

\begin{proof}[Proof of Theorem \ref{theorem:main}]
Let $a$ and $k$ be two constants such that $0<a,k<1$. 
The values of these constants will be set later.

We first take a set $X\subseteq V$ obtained 
by choosing uniformly at random, with replacement,  $\ceil{3n^{k}\log n}$ elements from $V$,
and check if there exists a triangle of $G$ with a vertex in $X$.
This can be done
using Grover search with
\[
O\left(\sqrt{|X|\times |\E(V)|}\right)=\tilde O\left(n^{1+k/2}\right)
\]
queries. If no triangle has been reported, 
we know that any triangle of $G$
must have an edge in $\mydelta{V}$.
 
We now describe a quantum algorithm that finds a triangle with an edge in $\mydelta{V}$,
if such a triangle exists.
The idea is to search for a set $A\in\Ss(V,\ceil{n^a})$ such that
$\mydelta{A}$ contains an edge of a triangle. Once such a set 
$A$ has been found, a triangle can be found in 
\[
O\left(\sqrt{|V|\times |\E(A)|}\right)= O\left(n^{1/2+a}\right)
\]
queries using Grover search.
To find such a set $A$, we perform a quantum walk over the Johnson graph
$J(V,\ceil{n^a})$.
The states of this walk correspond to the elements in $\Ss(V,\ceil{n^{a}})$.
The state corresponding to a set $A\in \Ss(V,\ceil{n^{a}})$ is  
marked if $\mydelta{A}$ contains an edge of a triangle of $G$. 
In case the set of marked states is not empty, which means that
there exists $\{v_1,v_2\}\in \mydelta{V}$ that is an edge 
of a triangle of $G$,
the fraction of marked states is 
\[
\varepsilon\ge\frac{{n-2 \choose \ceil{n^a}-2}}{{n \choose \ceil{n^a}}}=\Omega\left(n^{2(a-1)}\right).
\]
In our walk, the data structure stores the set $\mydelta{A}$.
Concretely, this is done by 
storing the couple $(v,N_G(v)\cap X)$ for each $v\in A$, since this information is enough to construct
$\mydelta{A}$ without using any additional query.
The setup cost is 
$\mathsf{S}= |A|\times |X|=\tilde O(n^{a+k})$ queries.
The update cost is $\mathsf{U}=2|X|=\tilde O(n^{k})$ queries. From Theorem~\ref{theorem:walks},
the query complexity of our quantum walk is thus
\[
\tilde O\left(n^{a+k}+\sqrt{n^{2(1-a)}}\left(\sqrt{n^{a}}\times n^k+\mathsf{C}\right)\right),
\]
where $\mathsf{C}$ is the cost of checking if a state is marked.
Under the assumption that the set $X$ is $k$-good for~$G$,
Proposition~\ref{proposition:checking} shows that
\[
\mathsf{C}=\tilde O\left(
n^{1/2+k}+n^{1/2+2a/3}+n^{1/2+a-k/2}
\right).
\]
Note that Proposition \ref{proposition:checking} can be applied here since 
the set $\mydelta{A}$ is stored in the data structure, and thus known.
The query complexity of the quantum walk then becomes
\[
\tilde O\left(
n^{a+k}+n^{1-a/2+k}+
n^{3/2}
\left(
n^{k-a}+
n^{-a/3}+
n^{-k/2}
\right)
\right).
\]

Under the assumption that the set $X$ is $k$-good for~$G$,
the query complexity of the whole algorithm is thus
\[
\tilde O
\left(
n^{1+k/2}+n^{1/2+a}+
n^{a+k}+n^{1-a/2+k}+
n^{3/2}
\left(
n^{k-a}+
n^{-a/3}+
n^{-k/2}
\right)
\right).
\]
When the set $X$ is not $k$-good for~$G$, 
the algorithm may need more
queries to finish, but we simply stop immediately
when the number of queries exceeds the above upper bound, 
and in this case output, for instance, that $G$ does not contain any triangle. 
This decision may be wrong, but
Lemma \ref{lemma:sparse} ensures that this happens only with probability
at most $1/n$.

Finally, taking $a=\frac{3}{4}$ and $k=\frac{1}{2}$ gives query complexity $\tilde O(n^{5/4})$,
as claimed. 
\end{proof}

\subsection{Proof of Proposition \ref{proposition:checking}}\label{subsec:prop}
This subsection is devoted to proving Proposition \ref{proposition:checking}.

The quantum algorithm of Proposition \ref{proposition:checking} will use
quantum walks in which the query complexity of the checking procedures 
depends on the size of $\mydelta{A,w}$. To control the query complexity 
of these quantum walks, we will first need, given $A$, $X$ and $w$,
to estimate $|\mydelta{A,w}|$. This will be done using the 
classical algorithm described in the following lemma.

\begin{lemma}\label{lemma:estimation}
Let $A$ and $X$ be two subsets of $V$, 
and assume that $\mydelta{A}$ is known.
Let $m$ be a positive integer.
There exists a classical deterministic algorithm $\mathcal{A}$
with query complexity $O(m\log n)$,
which receives as input a binary string $s$ of length $\poly(m,\log n)$ and
a vertex $w\in V$, and outputs a real number $\mathcal{A}(s,w)$ satisfying the following condition: 
for a fraction at least $1-\frac{3}{n}$ of the strings $s$,
the inequalities
\begin{equation*}
\frac{1}{3}\times|\mydelta{A,w}|\le \mathcal{A}(s,w)\le \frac{3}{2}\times \max \left(\frac{|A|\times (|A|-1)}{2m},|\mydelta{A,w}|\right)
\end{equation*}
hold for all vertices $\omega\in V$.
\end{lemma}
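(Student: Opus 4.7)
The plan is to implement $\mathcal{A}$ as a Monte Carlo estimator for $|\mydelta{A,w}|$ that uses the binary string $s$ to generate a fixed pool of random samples from $\E(A)$, and reuses these same samples for every input $w$. Concretely, I interpret $s$ as encoding a sequence of $N=\Theta(m\log n)$ independent uniformly chosen pairs $\{u_1,v_1\},\ldots,\{u_N,v_N\}\in\E(A)$. On input $w$, the algorithm queries $\mathcal{O}_G$ twice for each sample, to determine whether both $\{u_i,w\}$ and $\{v_i,w\}$ lie in $E$; combined with a table lookup into the known set $\mydelta{A}$, which costs no queries, this yields the indicator of $\{u_i,v_i\}\in\mydelta{A,w}$. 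Letting $\kappa(w)$ be the number of indices for which this indicator is $1$, the algorithm outputs $\mathcal{A}(s,w)=\max\bigl(\kappa(w)\cdot|\E(A)|/N,\,|\E(A)|/(2m)\bigr)$. The total query count is $2N=O(m\log n)$, and since $s$ only needs to encode $N$ pairs from a set of size at most $n^2$, its length is $O(m\log^2 n)=\poly(m,\log n)$.

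Fix $w\in V$ and write $p(w)=|\mydelta{A,w}|/|\E(A)|$, so that $\kappa(w)$ is binomial with mean $\mu(w)=Np(w)$. When $|\mydelta{A,w}|\ge|\E(A)|/m$, we have $\mu(w)\ge c\log n$ for the hidden constant $c$ in $N$, and a standard two-sided multiplicative Chernoff bound gives $\Pr\bigl[\,|\kappa(w)-\mu(w)|>\mu(w)/2\,\bigr]\le 2e^{-\Omega(c\log n)}$, so $\kappa(w)\cdot|\E(A)|/N$ lies in $[\tfrac{1}{2}|\mydelta{A,w}|,\tfrac{3}{2}|\mydelta{A,w}|]$ and already dominates the flooring term $|\E(A)|/(2m)$. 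When $|\mydelta{A,w}|<|\E(A)|/m$, stochastic domination by a binomial of mean $N/m$ together with a one-sided Chernoff bound gives $\Pr[\,\kappa(w)\ge 3N/(2m)\,]\le e^{-\Omega(c\log n)}$; on this event $\mathcal{A}(s,w)$ is at most $\tfrac{3}{2}|\E(A)|/m$, and the required lower bound is automatic since $\mathcal{A}(s,w)\ge|\E(A)|/(2m)\ge|\mydelta{A,w}|/2\ge\tfrac{1}{3}|\mydelta{A,w}|$. Choosing the constant $c$ large enough, each per-vertex failure probability is at most $1/n^3$.

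A union bound over the $n$ vertices $w\in V$ then shows that outside of a set of strings $s$ of density at most $3/n$ the required double inequality holds simultaneously for every $w$. The part I expect to be the main subtlety is the regime where $|\mydelta{A,w}|$ is strictly positive but smaller than the threshold $|\E(A)|/m$: there $\kappa(w)$ can easily equal $0$, so the raw estimator $\kappa(w)\cdot|\E(A)|/N$ underestimates $|\mydelta{A,w}|$ and would violate the lower bound $\tfrac{1}{3}|\mydelta{A,w}|$. The flooring by $|\E(A)|/(2m)$ in the definition of $\mathcal{A}(s,w)$ is precisely what rescues this case, and does so without ever pushing the output above the allowed upper bound $\tfrac{3}{2}|\E(A)|/m$; apart from verifying these constants, everything else reduces to routine Chernoff-plus-union-bound estimates.
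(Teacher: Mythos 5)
Your proof is correct and achieves the lemma by a cleaner, one-pass construction than the paper's. The paper's algorithm $\mathcal{A}'$ proceeds in two phases: a coarse ``is $|\mydelta{A,w}|$ above threshold $|\E(A)|/m$?'' test (Step~1, with a counter $c_1$ accumulated over $\Theta(\log n)$ independent batches of $m$ samples) followed, only in the large-density case, by a finer estimation phase over $\Theta(m\log n)$ fresh samples (Step~3). The analysis in the paper therefore splits into three regimes around the threshold ($<|\E(A)|/(3m)$, between $|\E(A)|/(3m)$ and $3|\E(A)|/m$, and $>3|\E(A)|/m$) and must handle the possibility that the coarse test passes or fails in the middle regime. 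You instead build a single fixed sample pool of $N=\Theta(m\log n)$ pairs, form the raw Monte Carlo estimate $\kappa(w)\cdot|\E(A)|/N$, and floor it at $|\E(A)|/(2m)$; the flooring replaces the paper's detection phase and collapses the case analysis to just two regimes, with the observation that in the low-density regime the floor simultaneously rescues the lower bound and stays within the allowed $\tfrac32\cdot|\E(A)|/m$ ceiling. Both approaches have the same $O(m\log n)$ query cost and the same derandomization-via-advice-string mechanism, with a union bound over the $n$ vertices $w$ giving the fraction $1-3/n$; your route is simpler while the paper's two-phase design is slightly more natural if one first thinks of a ``decide, then estimate'' heuristic. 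The only thing to pin down in your write-up is the explicit constant $c$ (e.g.\ $c\ge 48$ suffices to make $2e^{-c\log n/12}\le n^{-3}$), but that is routine.
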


While Lemma \ref{lemma:estimation} is proved by using relatively simple sampling arguments,
we mention some subtle points about its statement before proving it.
\begin{itemize}
\item
When $|\mydelta{A,w}|$ is too small, since giving a multiplicative estimation 
would require too many queries, we only ask that the output is upper bounded 
by $3(|A|(|A|-1))/(4m)$ for some parameter $m$ that can be chosen freely.
\item
While Lemma \ref{lemma:estimation} is proved by constructing a randomized algorithm based on random sampling,
Algorithm $\mathcal{A}$ in the statement of the lemma is a deterministic 
algorithm that receives a string~$s$ of polynomial length, 
intended to be the string of random bits used for sampling.
Later analyses will be considerably simplified by this formulation,
since the output of Algorithm $\mathcal{A}$ will be 
used, as already mentioned, to control the running times of the quantum walks we construct in 
Proposition \ref{proposition:checking}
(more complicated arguments would be necessary if these running times were random variables).
\item
A quantum algorithm based on quantum counting \cite{Brassard+ICALP98} could actually 
be used instead
of the classical algorithm~$\mathcal{A}$. While this would reduce the query complexity in Lemma \ref{lemma:estimation}, 
this does not reduce the final query complexity of our triangle finding algorithm.
\end{itemize}
We now proceed to the proof.

\begin{proof}[Proof of Lemma \ref{lemma:estimation}]
Consider the randomized algorithm $\mathcal{A}'$ described in Figure \ref{fig:algorithmA}.
This algorithm receives as input a vertex $w\in V$ and outputs a real number $\mathcal{A}'(w)$.
We define $\mathcal{A}$ as the deterministic version of $\mathcal{A}'$ where 
the bit flips used by $\mathcal{A}'$ are given to $\mathcal{A}$ as the additional input $s$.

\begin{figure}[ht!]
\begin{center}
\fbox{
\begin{minipage}{15 cm} 
\underline{\textbf{Algorithm $\mathcal{A'}$}}\vspace{2mm}

Input: a vertex $w\in V$
\begin{itemize}
\item[1.]
Initialize a counter $c_1$ to zero and then
repeat the following $\ceil{240\log n}$ times:
\begin{itemize}
\item[1.1.]
Take $m$ elements $\{u_1,v_1\},\ldots,\{u_m,v_m\}$ uniformly at random, with replacement, from $\E(A)$;
\item[1.2.]
Increment $c_1$ by one if 
there exists at least one index $i\in\{1,\ldots,m\}$ satisfying the following three conditions:
$\{u_i,v_i\}\in \mydelta{A}$ and
$\{u_i,w\}\in E$ 
and $\{v_i,w\}\in E$;
\end{itemize}
\item[2.]
If $c_1\le \ceil{240 \log n}/2$ then output $\mathcal{A}'(w)=\frac{|\E(A)|}{m}$;
\item[3.]
If $c_1> \ceil{240 \log n}/2$ then do:
\begin{itemize}
\item[3.2.] 
Initialize a counter $c_2$ to zero and then
repeat the following $\ceil{72m\log n}$ times:
\begin{itemize}
\item[3.2.1]
Take a pair $\{u,v\}$ uniformly at random from $\E(A)$;
\item[3.2.2]
Increment $c_2$ by one if the following three conditions are satisfied: $\{u,v\}\in \mydelta{A}$ and
$\{u,w\}\in E$ 
and $\{v,w\}\in E$;
\end{itemize}
\item[3.3]
Output $\mathcal{A}'(w)=\frac{c_2|\E(A)|}{\ceil{72m\log n}}$;
\end{itemize}
\end{itemize}
\end{minipage}
}
\end{center}\vspace{-4mm}
\caption{Algorithm $\mathcal{A'}$ computing an estimation of $|\mydelta{A,w}|$.}\label{fig:algorithmA}
\end{figure}

Note that only Steps 1.2 and 3.2.2 of Algorithm $\mathcal{A}'$ have non-zero query complexity.
Membership in $\mydelta{A}$ can be checked without query (since the set $\mydelta{A}$ is known), which implies
that the overall query complexity 
of $\mathcal{A}'$ is $O(m \log n)$. We show below that, for each vertex $w\in V$, the real number
$\mathcal{A'}(w)$ output by the algorithm satisfies
\begin{equation}\label{cond2}
\frac{1}{3}\times|\mydelta{A,w}|\le \mathcal{A}'(w)\le \frac{3}{2}\times \max \left(\frac{|\E(A)|}{m},|\mydelta{A,w}|\right)
\end{equation}
with probability at least $1-3/n^2$. The union bound then implies that, with probability at least $1-3/n$,
Condition (\ref{cond2}) holds for all $w\in V$,  
which concludes the proof.

Assume first that 
$|\mydelta{A,w}|< \frac{|\E(A)|}{3m}$.
Then the probability that $c_1$ is incremented by one during 
one execution of the loop of Steps 1.1-1.2 is 
\[
1-\left(1-\frac{|\mydelta{A,w}|}{|\E(A)|}\right)^{m}<
1-\left(1- \frac{1}{3m}\right)^{m}<
1-e^{-1/2}<0.4.
\]
From Chernoff bound, the inequality $c_1<\ceil{240\log n}/2$ holds at the end of the 
loop of Step 1
with probability at least
\[
1-\exp\left(-\frac{1}{3}\times \frac{1}{16}\times 0.4\ceil{240\log n}\right)\ge 1-\frac{1}{n^2}.
\]
When this happens, the algorithm outputs $\mathcal{A}'(w)=|\E(A)|/m$, which 
satisfies Condition~(\ref{cond2}).

Next, assume that $\frac{|\E(A)|}{3m}\le|\mydelta{A,w}|\le \frac{3|\E(A)|}{m}$.
In case the algorithm passes the test of Step~2, the output is $|\E(A)|/m$, which
satisfies Condition (\ref{cond2}). Otherwise,
since the probability that a pair $\{u,v\}$ taken uniformly 
at random in $\E(A)$ satisfies the conditions of Step 3.2.2 is 
$|\mydelta{A,w}|/|\E(A)|$, Chernoff bound implies that the output $\mathcal{A}'(w)=\frac{c_2|\E(A)|}{\ceil{72m\log n}}$
at Step~3.3 is between $\frac{1}{2}\times |\mydelta{A,w}|$
and $\frac{3}{2}\times |\mydelta{A,w}|$
with probability at least
\[
1-2
\exp\left(-\frac{1}{3}\times\frac{1}{4}\times \frac{\ceil{72m\log n}}{|\E(A)|}\times |\mydelta{A,w}|\right)
\ge 1- \frac{2}{n^2},
\]
in which case Condition (\ref{cond2}) is satisfied.

Finally, assume that $|\mydelta{A,w}|> \frac{3|\E(A)|}{m}$.
Then the probability that $c_1$ is not incremented during one execution
of the loop of Steps 1.1-1.2 is
\[
\left(1-\frac{|\mydelta{A,w}|}{|\E(A)|}\right)^{m}<
\left(1-\frac{3}{m}\right)^m\le e^{-3}<0.1.
\]
From Chernoff bound, the inequality $c_1>\ceil{240\log n}/2$ holds 
at the end of the loop of Step 1
with probability at least 
\[
1-\exp\left(-\frac{1}{2}\times\frac{16}{81}\times0.9\ceil{240\log n}\right)>1-\frac{1}{n^2},
\]
and then the algorithm proceeds to Step 3. If this happens then,
from the same argument as in the previous paragraph, 
the output of the algorithm is between $\frac{1}{2}\times |\mydelta{A,w}|$
and $\frac{3}{2}\times |\mydelta{A,w}|$
with probability at least $1-2/n^2$, in which case Condition~(\ref{cond2}) is satisfied.
\end{proof}

The following lemma will be used to give a lower bound on the fraction of marked states
in the quantum walks used by the quantum algorithm of Proposition \ref{proposition:checking}. 
\begin{lemma}\label{lemma:fraction}
Let $A$ and $X$ be two subsets of $V$, and assume that $|A|>3$.
Let $w$ be any vertex in $V$,
$\{v_1,v_2\}$ be any element of $\E(A)$, and 
 $r$ be an integer such that $3<r\le |A|$.
Suppose that $B$ is taken uniformly at random in $\Ss(A,r)$,
and consider the following two conditions:
\begin{itemize}
\item[(i)]
$\{v_1,v_2\}\in \E(B)$;
\item[(ii)]
$|\mydelta{B,w}|\le \frac{8(r-2)(r-3)}{(|A|-2)(|A|-3)}\times \frac{|\mydelta{A,w}|}{3}+16r$.
\end{itemize}
Then
\[
\Pr_{B\in \Ss(A,r)} \left[
\textrm{ Conditions (i) and (ii) hold }
\right]\ge
\frac{(r-1)^2}{2|A|^2}.
\]
\end{lemma}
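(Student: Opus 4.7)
\begin{proof-sketch}
The plan is to write the target event as the intersection of two events, bound the probability of each, and combine. Event (i) is just a standard inclusion event, while event (ii) is a tail bound on the random variable $|\mydelta{B,w}|$ conditioned on (i); I will attack (ii) via Markov's inequality after computing a conditional expectation.

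First I would verify, by the standard hypergeometric calculation
\[
\Pr_{B\in\Ss(A,r)}\bigl[\{v_1,v_2\}\subseteq B\bigr]
=\frac{\binom{|A|-2}{r-2}}{\binom{|A|}{r}}
=\frac{r(r-1)}{|A|(|A|-1)}
\ge \frac{(r-1)^2}{|A|^2},
\]
the last inequality coming from $r|A|\ge (r-1)(|A|-1)$. So if I can show that conditional on $\{v_1,v_2\}\subseteq B$, event (ii) occurs with probability at least $1/2$, the lemma follows.

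Next I would compute $\mathbb{E}\bigl[|\mydelta{B,w}|\,\big|\,\{v_1,v_2\}\subseteq B\bigr]$ by linearity of expectation. Split $\mydelta{A,w}$ into three pieces according to how many of $v_1,v_2$ a pair contains: let $T_1$ be the number of pairs in $\mydelta{A,w}$ disjoint from $\{v_1,v_2\}$, $T_2$ the number meeting $\{v_1,v_2\}$ in exactly one vertex, and $T_3\in\{0,1\}$ the indicator that $\{v_1,v_2\}$ itself lies in $\mydelta{A,w}$. Standard hypergeometric counts give the conditional inclusion probabilities $\tfrac{(r-2)(r-3)}{(|A|-2)(|A|-3)}$, $\tfrac{r-2}{|A|-2}$, and $1$ respectively, so the conditional expectation equals
\[
T_1\cdot\frac{(r-2)(r-3)}{(|A|-2)(|A|-3)}\;+\;T_2\cdot\frac{r-2}{|A|-2}\;+\;T_3.
\]
Here the key observation is that $T_2\le 2(|A|-2)$, simply because each pair counted by $T_2$ is determined by choosing one of $v_1,v_2$ and one vertex in $A\setminus\{v_1,v_2\}$. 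Therefore $T_2\cdot\tfrac{r-2}{|A|-2}\le 2(r-2)\le 2r$, and the $T_3$ term is at most $1\le r$. Since $T_1\le |\mydelta{A,w}|$ and $1\le 4/3$, the conditional expectation is bounded by
\[
\frac{4(r-2)(r-3)}{3(|A|-2)(|A|-3)}\cdot |\mydelta{A,w}|\;+\;3r.
\]

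Finally I would apply Markov's inequality to $|\mydelta{B,w}|$ under the conditional measure: with probability at least $1/2$, $|\mydelta{B,w}|$ is at most twice this expectation, yielding
\[
|\mydelta{B,w}|\le \frac{8(r-2)(r-3)}{3(|A|-2)(|A|-3)}\cdot|\mydelta{A,w}|+6r,
\]
which is stronger than (ii) (the $16r$ slack easily absorbs $6r$). Multiplying the $1/2$ factor by the $\tfrac{r(r-1)}{|A|(|A|-1)}$ bound on the probability of (i) yields the required $\tfrac{(r-1)^2}{2|A|^2}$. The only delicate step is controlling the $T_2$ contribution, since a naive bound $T_2\le|\mydelta{A,w}|$ would ruin the split; the combinatorial bound $T_2\le 2(|A|-2)$ is what makes the cross term absorbable into the additive $O(r)$ correction.
\end{proof-sketch}
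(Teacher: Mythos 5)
Your proof is correct and follows essentially the same structure as the paper's: bound $\Pr[\text{(i)}]$ exactly via the hypergeometric count, bound the conditional expectation of $|\mydelta{B,w}|$ by linearity of expectation while splitting according to overlap with $\{v_1,v_2\}$, then finish with Markov's inequality. The only cosmetic differences are that you sum over \emph{vertex} pairs (with the $T_1,T_2,T_3$ decomposition) where the paper sums over \emph{index} pairs $\{i,j\}$, and you apply Markov at a factor-two multiple of the expectation while the paper applies it at the exact threshold of condition (ii) — both give the needed $1/2$.
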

\begin{proof}
First observe that 
\[
\Pr_{B\in \Ss(A,r)}\big[\textrm{ Condition (i) holds }\big]=
\frac{{|A|-2 \choose r-2}}{{|A| \choose r}}=
\frac{r(r-1)}{|A|(|A|-1)}\ge \frac{(r-1)^2}{|A|^2}.
\]
We show below that the inequality
\begin{equation*}
\Pr_{B\in \Ss(A,r)}\left[\textrm{ Condition (ii) does not hold } \:\big|\: \textrm{ Condition (i) holds }\right]\le
\frac{1}{2},
\end{equation*}
which will conclude the proof of the lemma.

Choosing $B$ under the assumption that $\{v_1,v_2\}\in \E(B)$
is equivalent to choosing $r-2$ vertices from $A\setminus\{v_1,v_2\}$. 
Let us call these vertices $v_3,\ldots,v_{r}$.
For each $\{i,j\}\in\E(\{1,\ldots,r\})$,
let $Y_{ij}$ denote the random variable with value one if 
$
\{v_i,v_j\}\in \mydelta{A,w}
$
and value zero otherwise.
We have
\[
|\mydelta{B,w}|=\sum_{\{i,j\}\in \E(\{1,\ldots,r\})}Y_{ij}.
\]
Note that, for each $\{i,j\}\in \E(\{3,\ldots,r\})$, for  any $\{u,u'\}\in \mydelta{A\setminus\{v_1,v_2\},w}$
the probability that $\{v_{i},v_{j}\}=\{u,u'\}$ is $\frac{1}{|\E(A\setminus\{v_1,v_2\})|}$.
We can thus use the upper bounds 
\[
\left\{
\begin{array}{ll}
E[Y_{ij}]\le \frac{|\mydelta{A,w}|}{|\E(A\setminus\{v_1,v_2\})|} &\textrm{ if } \{i,j\}\in\E(\{3,\ldots,r\}),\\
E[Y_{ij}]\le 1 &\textrm{ if } \{i,j\}\in\E(\{1,\ldots,r\})\setminus\E(\{3,\ldots,r\}),
\end{array}
\right.
\]
to derive the following upper bound on the expectation of $|\mydelta{B,w}|$:
\begin{align*}
E\Big[|\mydelta{B,w}|\Big]&=\sum_{\{i,j\}\in \E(\{1,\ldots,r\})} E\left[Y_{ij}\right]\\
&\le |\E(\{3,\ldots,r\})| \times \frac{|\mydelta{A,w}|}{|\E(A\setminus\{v_1,v_2\})|}
+|\E(\{1,\ldots,r\})\setminus \E(\{3,\ldots,r\})|\\
&= \frac{(r-2)(r-3)}{(|A|-2)(|A|-3)} \times |\mydelta{A,w}|
+(2r-3)\\
&\le \frac{(r-2)(r-3)}{(|A|-2)(|A|-3)} \times |\mydelta{A,w}|
+2r\:.
\end{align*}
Finally, let us write $\delta=\frac{2r(|A|-2)(|A|-3)}{(r-2)(r-3)}$.
From Markov's inequality, we have
\begin{align*}
\Pr\left[\textrm{ Condition (ii) does not hold } \:\big|\: \textrm{ Condition (i) holds }\right]
&\le\frac{1}{8}\times \frac{|\mydelta{A,w}|+\delta}{\frac{|\mydelta{A,w}|}{3}+\delta}\\
&\le\frac{1}{8}\times \left(\frac{|\mydelta{A,w}|}{\frac{|\mydelta{A,w}|}{3}} + \frac{\delta}{\delta}\right)\\
&\le\frac{1}{2}\:,
\end{align*}
as claimed.
\end{proof}

We are now ready to give the proof of Proposition \ref{proposition:checking}.
\begin{proof}[Proof of Proposition \ref{proposition:checking}]
The algorithm first takes a sufficiently long binary string $s$ uniformly at random.
We will later apply Algorithm~$\mathcal{A}$ of Lemma \ref{lemma:estimation} with $m=\ceil{n^{k}}$,
using this binary string~$s$ as input.
From Lemma \ref{lemma:estimation} we know that, with probability at least $1-3/n$ 
on the choice of $s$, the 
following property holds:
\begin{equation}\label{cond1}
\frac{1}{3}\times |\mydelta{A,w}|\le 
\mathcal{A}(s,w)\le \frac{3}{2}\times \max \left(\frac{\ceil{n^a}(\ceil{n^a}-1)}{2\ceil{n^k}},|\mydelta{A,w}|\right) \:\:\textrm{ for all }w\in V.
\end{equation}
We will show below that, when Property (\ref{cond1}) holds, for any fixed vertex $w\in V$ the cost 
of checking if there exists a pair $\{v_1,v_2\}\in \mydelta{A}$ such that 
$\{v_1,v_2,w\}$ is a triangle of~$G$ is
\[
Q(w)=\tilde O\left(n^k+n^{2a/3}+n^{a-k/2}+\sqrt{|\mydelta{A,w}|}\right)
\]
queries. 
When Property (\ref{cond1}) holds,
the algorithm of Theorem~\ref{theorem:ambainis} then enables us to
to check, with probability at least~$3/4$, 
the existence of a pair $\{v_1,v_2\}\in \mydelta{A}$ that is an edge of
a triangle of~$G$ with query complexity
\begin{align*}
\tilde O\left( \sqrt{\sum_{w\in V} (Q(w))^2}\right)
&=\tilde O\left( \sqrt{\sum_{w\in V}\left(\left(n^{k}+n^{2a/3}+n^{a-k/2}\right)^2+ |\mydelta{A,w}|\right)}\right)\\
&=\tilde O\left( n^{1/2+k}+n^{1/2+2a/3}+n^{1/2+a-k/2}+\sqrt{\sum_{w\in V}|\mydelta{A,w}|}\right)\\
&=\tilde O\left( n^{1/2+k}+n^{1/2+2a/3}+n^{1/2+a-k/2}\right),
\end{align*}
where the last equality is obtained using the fact
that $X$ is $k$-good.

When Property (\ref{cond1}) does not hold, 
which happens with probability at most $3/n$,
the algorithm may need more
queries to finish, 
but we simply stop immediately
when the number of queries exceeds the above upper bound, 
and in this case
output that $\mydelta{A}$ does not contain an edge of a triangle of $G$. 
This decision may be wrong but, again,
this happens only with probability
at most $3/n$.

We now show how to obtain the claimed upper bound on $Q(w)$,
the query complexity of checking if there exists a pair
$\{v_1,v_2\}\in \mydelta{A}$ such that 
$\{v_1,v_2,w\}$ is a triangle of~$G$ when Property~(\ref{cond1}) holds.
We first use Algorithm~$\mathcal{A}$ with input $(s,w)$
to obtain
$\mathcal{A}(s,w)$. The cost of this step is $\tilde O(n^k)$ queries,
from Lemma \ref{lemma:estimation}.
We then perform a quantum walk over the Johnson graph $J(A,\ceil{n^{2a/3}})$.
The states of this walk correspond to the elements in $\Ss(A,\ceil{n^{2a/3}})$.
We now define the set of marked states of the walk.
The state corresponding to a set 
$B\in \Ss(A,\ceil{n^{2a/3}})$ is marked if $B$ satisfies the following two conditions:
\begin{itemize}
\item[(i)]
there exists a pair $\{v_1,v_2\}\in \mydelta{B,w}$ such that $\{v_1,v_2\}\in E$ (i.e., such that $\{v_1,v_2,w\}$ is a triangle of~$G$); 
\item[(ii)]
$|\mydelta{B,w}|\le \frac{8(\ceil{n^{2a/3}}-2)(\ceil{n^{2a/3}}-3)}{(\ceil{n^a}-2)(\ceil{n^a}-3)}\times \mathcal{A}(s,w)+16\ceil{n^{2a/3}}$.
\end{itemize}
Lemma~\ref{lemma:fraction} shows that, when Property (\ref{cond1}) holds and in case there exists a pair
$\{v_1,v_2\}\in \mydelta{A}$ such that $\{v_1,v_2,w\}$ is a triangle of~$G$,
 the fraction of marked states is 
 \[
 \varepsilon=\Omega\left(n^{2\left(\frac{2a}{3}-a\right)}\right)=\Omega\left(n^{-2a/3}\right).
\]
The data structure of the walk will store $\mydelta{B,w}$. Concretely, this is done by 
storing the couple $(v,e_v)$ for each $v\in B$, where $e_v=1$ if $\{v,w\}\in E$
and $e_v=0$ if $\{v,w\}\notin E$ (observe that this information is indeed enough to construct
$\mydelta{B,w}$ without using any additional query, since the set $\mydelta{A}$ is known).
The setup cost is $\ceil{n^{2a/3}}$ queries since it is sufficient to check if $\{v,w\}$ is an edge for all $v\in B$.
The update cost is $2$ queries. The checking cost is
\[
O\left(\sqrt{|\mydelta{B,w}|}\right)= O\left(\sqrt{n^{-2a/3}\times \mathcal{A}(s,w)+n^{2a/3}}\right)
\]
queries,
since Condition (ii) can be checked without query (since $\mydelta{B,w}$ is stored in the data structure)
and then Condition (i) can be checked by performing a Grover search over $\mydelta{B,w}$.
Theorem \ref{theorem:walks}, applied under the assumption that Property~(\ref{cond1}) holds, thus gives the upper bound
\begin{align*}
Q(w)&=\tilde O\left(n^k+n^{2a/3}+\sqrt{n^{2a/3}}\left(\sqrt{n^{2a/3}}\times 2 +\sqrt{n^{-2a/3}\times \mathcal{A}(s,w)+n^{2a/3}}\right)\right)\\
&=\tilde O\left(n^k+n^{2a/3}+n^{2a/3} +\sqrt{\mathcal{A}(s,w)}\right)\\
&=\tilde O\left(n^k+n^{2a/3}+n^{a-k/2}+\sqrt{|\mydelta{A,w}|}\right),
\end{align*}
as claimed.
\end{proof}

\section*{Acknowledgments}
The author is grateful to Harumichi Nishimura and Seiichiro Tani for stimulating discussions about the quantum complexity of triangle finding and several comments about this paper, 
and to three anonymous reviewers for their suggestions.
This work is supported by
the Grant-in-Aid for Young Scientists~(B)~No.~24700005 of the Japan Society for the Promotion of Science
and the Grant-in-Aid for Scientific Research on Innovative Areas~No.~24106009 of
the Ministry of Education, Culture, Sports, Science and Technology in Japan.


\begin{thebibliography}{10}

\bibitem{AmbainisSICOMP07}
{\sc Ambainis, A.}
\newblock Quantum walk algorithm for element distinctness.
\newblock {\em SIAM Journal on Computing 37}, 1 (2007), 210--239.

\bibitem{Ambainis10}
{\sc Ambainis, A.}
\newblock Quantum search with variable times.
\newblock {\em Theory of Computing Systems 47}, 3 (2010), 786--807.

\bibitem{Bansal+ToC12}
{\sc Bansal, N., and Williams, R.}
\newblock Regularity lemmas and combinatorial algorithms.
\newblock {\em Theory of Computing 8}, 1 (2012), 69--94.

\bibitem{BelovsFOCS12}
{\sc Belovs, A.}
\newblock Learning-graph-based quantum algorithm for $k$-distinctness.
\newblock In {\em Proceedings of the 53rd Symposium on Foundations of Computer
  Science\/} (2012), pp.~207--216.

\bibitem{BelovsSTOC12}
{\sc Belovs, A.}
\newblock Span programs for functions with constant-sized 1-certificates:
  extended abstract.
\newblock In {\em Proceedings of the 44th Symposium on Theory of Computing\/}
  (2012), pp.~77--84.

\bibitem{Belovs+ICALP13}
{\sc Belovs, A., Childs, A.~M., Jeffery, S., Kothari, R., and Magniez, F.}
\newblock Time-efficient quantum walks for 3-distinctness.
\newblock In {\em Proceedings of the 40th International Colloquium on Automata,
  Languages and Programming\/} (2013), pp.~105--122.

\bibitem{Belovs+CCC13}
{\sc Belovs, A., and Rosmanis, A.}
\newblock On the power of non-adaptive learning graphs.
\newblock In {\em Proceedings of the 28th Conference on Computational
  Complexity\/} (2013), pp.~44--55.

\bibitem{Brassard+02}
{\sc Brassard, G., H{\o}yer, P., Mosca, M., and Tapp, A.}
\newblock Quantum amplitude amplification and estimation.
\newblock In {\em Quantum Computation and Quantum Information: A Millennium
  Volume}, vol.~305 of {\em AMS Contemporary Mathematics Series}. American
  Mathematical Society, 2002, pp.~53--74.

\bibitem{Brassard+ICALP98}
{\sc Brassard, G., H{\o}yer, P., and Tapp, A.}
\newblock Quantum counting.
\newblock In {\em Proceedings of the 25th International Colloquium on Automata,
  Languages and Programming\/} (1998), pp.~820--831.

\bibitem{Buhrman+TCS02}
{\sc Buhrman, H., and de~Wolf, R.}
\newblock Complexity measures and decision tree complexity: a survey.
\newblock {\em Theoretical Computer Science 288}, 1 (2002), 21--43.

\bibitem{Buhrman+SICOMP05}
{\sc Buhrman, H., D{\"u}rr, C., Heiligman, M., H{\o}yer, P., Magniez, F.,
  Santha, M., and de~Wolf, R.}
\newblock Quantum algorithms for element distinctness.
\newblock {\em SIAM Journal on Computing 34}, 6 (2005), 1324--1330.

\bibitem{Czumaj+SICOMP09}
{\sc Czumaj, A., and Lingas, A.}
\newblock Finding a heaviest vertex-weighted triangle is not harder than matrix
  multiplication.
\newblock {\em SIAM Journal on Computing 39}, 2 (2009), 431--444.

\bibitem{GroverSTOC96}
{\sc Grover, L.~K.}
\newblock A fast quantum mechanical algorithm for database search.
\newblock In {\em Proceedings of the 28th Symposium on the Theory of
  Computing\/} (1996), pp.~212--219.

\bibitem{Hoyer+STOC07}
{\sc H{\o}yer, P., Lee, T., and Spalek, R.}
\newblock Negative weights make adversaries stronger.
\newblock In {\em Proceedings of the 39th Symposium on Theory of Computing\/}
  (2007), pp.~526--535.

\bibitem{Hoyer+ICALP03}
{\sc H{\o}yer, P., Mosca, M., and de~Wolf, R.}
\newblock Quantum search on bounded-error inputs.
\newblock In {\em Proceedings of the 30th International Colloquium on Automata,
  Languages and Programming\/} (2003), pp.~291--299.

\bibitem{Itai+SICOMP78}
{\sc Itai, A., and Rodeh, M.}
\newblock Finding a minimum circuit in a graph.
\newblock {\em SIAM Journal on Computing 7}, 4 (1978), 413--423.

\bibitem{Jeffery+SODA13}
{\sc Jeffery, S., Kothari, R., and Magniez, F.}
\newblock Nested quantum walks with quantum data structures.
\newblock In {\em Proceedings of the 24th Annual ACM-SIAM Symposium on Discrete
  Algorithms\/} (2013), pp.~1474--1485.

\bibitem{LeGall14}
{\sc {Le Gall}, F.}
\newblock Powers of tensors and fast matrix multiplication.
\newblock In {\em Proceedings of the 39th International Symposium on Symbolic
  and Algebraic Computation\/} (to appear, 2014).

\bibitem{Lee+SODA13}
{\sc Lee, T., Magniez, F., and Santha, M.}
\newblock Improved quantum query algorithms for triangle finding and
  associativity testing.
\newblock In {\em Proceedings of the 24th Annual ACM-SIAM Symposium on Discrete
  Algorithms\/} (2013), pp.~1486--1502.

\bibitem{Magniez+SICOMP11}
{\sc Magniez, F., Nayak, A., Roland, J., and Santha, M.}
\newblock Search via quantum walk.
\newblock {\em SIAM Journal on Computing 40}, 1 (2011), 142--164.

\bibitem{Magniez+SICOMP07}
{\sc Magniez, F., Santha, M., and Szegedy, M.}
\newblock Quantum algorithms for the triangle problem.
\newblock {\em SIAM Journal on Computing 37}, 2 (2007), 413--424.

\bibitem{PatrascuSTOC10}
{\sc Patrascu, M.}
\newblock Towards polynomial lower bounds for dynamic problems.
\newblock In {\em Proceedings of the 42nd Symposium on Theory of Computing\/}
  (2010), pp.~603--610.

\bibitem{ReichardtFOCS09}
{\sc Reichardt, B.}
\newblock Span programs and quantum query complexity: The general adversary
  bound is nearly tight for every {Boolean} function.
\newblock In {\em Proceedings of the 50th Symposium on Foundations of Computer
  Science\/} (2009), pp.~544--551.

\bibitem{Schnorr+RANDOM98}
{\sc Schnorr, C.-P., and Subramanian, C.~R.}
\newblock Almost optimal (on the average) combinatorial algorithms for
  {Boolean} matrix product witnesses, computing the diameter (extended
  abstract).
\newblock In {\em Proceedings of the 2nd workshop on Randomization and
  Approximation Techniques in Computer Science\/} (1998), pp.~218--231.

\bibitem{Szegedy03}
{\sc Szegedy, M.}
\newblock On the quantum query complexity of detecting triangles in graphs.
\newblock arXiv:quant-ph/0310107, 2003.

\bibitem{SzegedyFOCS04}
{\sc Szegedy, M.}
\newblock Quantum speed-up of markov chain based algorithms.
\newblock In {\em Proceedings of the 45th Symposium on Foundations of Computer
  Science\/} (2004), pp.~32--41.

\bibitem{WilliamsSTOC12}
{\sc {Vassilevska Williams}, V.}
\newblock Multiplying matrices faster than {Coppersmith-Winograd}.
\newblock In {\em Proceedings of the 44th Symposium on Theory of Computing\/}
  (2012), pp.~887--898.

\bibitem{Williams+FOCS10}
{\sc {Vassilevska Williams}, V., and Williams, R.}
\newblock Subcubic equivalences between path, matrix and triangle problems.
\newblock In {\em Proceedings of the 51th Symposium on Foundations of Computer
  Science\/} (2010), pp.~645--654.

\bibitem{Vassilevska+SICOMP13}
{\sc {Vassilevska Williams}, V., and Williams, R.}
\newblock Finding, minimizing, and counting weighted subgraphs.
\newblock {\em SIAM Journal on Computing 42}, 3 (2013), 831--854.

\bibitem{WilliamsSTOC14}
{\sc Williams, R.}
\newblock Faster all-pairs shortest paths via circuit complexity.
\newblock In {\em Proceedings of the 46th Symposium on Theory of Computing\/}
  (2014), pp.~664--673.

\end{thebibliography}

\end{document}